\renewcommand{\selectlanguage}[1]{}
	\definecolor{BLACK}{gray}{0}
	\definecolor{WHITE}{gray}{1}
	\definecolor{RED}{rgb}{1,0,0}
	\definecolor{GREEN}{rgb}{0,1,0}
	\definecolor{BLUE}{rgb}{0,0,1}
	\definecolor{CYAN}{cmyk}{1,0,0,0}
	\definecolor{MAGENTA}{cmyk}{0,1,0,0}
	\definecolor{YELLOW}{cmyk}{0,0,1,0}
\theoremstyle{plain}
\theoremstyle{plain}
\newenvironment{proof}[1][\protect\proofname]{\par
	\normalfont\topsep6\p@\@plus6\p@\relax
	\Trivlist
	\itemindent\parindent
	\item[\hskip\labelsep
	\scshape
	#1]\ignorespaces
}{%
	\endtrivlist\@endpefalse
}
\providecommand{\proofname}{Proof}
\theoremstyle{plain}
\providecommand{\lemmaname}{Lemma}
\providecommand{\definitionname}{Definition}
\providecommand{\propositionname}{Proposition}
\definecolor{myurlcolor}{rgb}{0,0,0.7}
\renewcommand{\ket}[1]{\left| #1 \right\rangle}
\renewcommand{\ketbra}[2]{\left|#1\middle\rangle\!\middle\langle#2\right|}
\newcommand{\proj}[1]{\ketbra{#1}{#1}}
\newcommand{\haH}
\newtheorem{theorem}{Theorem}
\newtheorem{lemma}{Lemma}
\newtheorem{definition}{Definition}
\newcommand{%
  \immediate\write18{texcount -1 -sum -merge .tex > -words}%
  \input{-words}words%
}[1]{%
  \immediate\write18{texcount -1 -sum -merge #1.tex > #1-words}%
  \input{#1-words}words%
}
\newcommand{\Ket}[1]{\left| #1 \middle \rangle\!\right \rangle}
\newcommand{\raisedtarget}[1]{%
  \raisebox{\fontcharht\font`P}[0pt][0pt]{\hypertarget{#1}{}}%
}
\begin{document}
\title{Entropic limitations on fixed causal order}

\date{\today}

\author{Matheus Capela}
\email{capela.quantum@gmail.com}
\affiliation{Departamento de F\'isica, Universidade Federal de S\~{a}o Carlos, Rodovia Washington Lu\'is, km 235 - SP-310, 13565-905 S\~{a}o Carlos, SP, Brazil}

\author{Kaumudibikash Goswami}
\email{goswami.kaumudibikash@gmail.com}
\affiliation{QICI Quantum Information and Computation Initiative, Department of Computer Science,
The University of Hong Kong, Pokfulam Road, Hong Kong}

\date{\today}
\begin{abstract}
Quantum processes can exhibit scenarios beyond a fixed order of events. We propose information inequalities that, when violated, constitute sufficient conditions to certify quantum processes without a fixed causal order---causally separable or indefinite causal ordered processes. The inequalities hold valid for a vast class of information measures. Nevertheless, we take under scrutiny the von Neumann, $\alpha-$R\'enyi entropies with parameter $\alpha \in [1/2,1) \cup (1,\infty)$, and max- and min-entropies. We also discuss how the strong subadditivity of quantum (von Neumann) entropy, used along with the information inequality developed here, implies relevant witnesses of causally separable and indefinite causal ordered processes in marginal scenarios. Importantly, we show the violation of these inequalities for the quantum switch, a paradigmatic example of a process with indefinite causal order. Our approach contributes to the important research direction of information-theoretic characterization of quantum processes beyond fixed causal orders. 
\end{abstract}
\maketitle
\section{Introduction}
\noindent

In our daily experience, we observe events in a fixed order. Such a notion of fixed causal order is well captured in the conventional framework of causal models~\cite{Pearl2009, Spirtes2001}. However, it is possible to have a more general scenario where the causal order has classical uncertainty. A typical example is a distributed network with each node associated with an event. In such a scenario, the random fluctuation of the local clocks at the nodes induces a probabilistic mixture of different causal orders~\cite{Lamport1978}. More interestingly, in quantum theory, it is possible to have indefinite causal order~\cite{oreshkov2012quantum, chiribella2013quantum} where the departure from the fixed causal order cannot be attributed to classical ignorance. Indefinite causal ordered processes are important from a foundational perspective of quantum gravity~\cite{Hardy_2007, Hardy_2009, Zych_2019} and from an applied aspect of computational and communication advantages~\cite{Ara_jo_2014,Gu_rin_2016,Ara_jo_2017,ebler2018enhanced,Chiribella_2021}.  To capture such general causal structures, the framework of supermaps has been proposed~\cite{chiribella2008transforming, chiribella2013quantum}, which can be represented via the process matrix~\cite{oreshkov2012quantum}. In this framework, the events, represented by quantum operations, are input to the supermaps, which encode all the background causal relationships. The fixed-ordered processes are a particular class of supermaps, known as quantum combs~\cite{chiribella2009theoretical}, or multi-time process~\cite{pollock2018non}. These processes physically signify the dynamics of open quantum systems where the quantum system of interest interacts with the environment, and the events are quantum operations on the system in between those system-environment interactions. 


An interesting question is how to demarcate such fixed causal ordered processes from causally separable and indefinite causal ordered processes. The crude approach of performing process tomography of the supermap quickly becomes intractable with increasing local dimensions and the number of events. In this work, we aim to give an alternative answer from an information-theoretic perspective.




The information-theoretic characterization of quantum processes has found important applications in quantum communication~\cite{wilde2011classical}, quantum cryptography~\cite{Renner2005}, quantum networks~\cite{gamal2011lecturenotes}, black hole physics~\cite{Bekenstein2004}, and quantum thermodynamics~\cite{Goold2016}. 
One of the fundamental principles of information theory is the data processing inequality, which states that the correlation between a system and a reference, quantified by mutual information, does not increase under the action of local quantum channels~\cite{Lieb1973, Schumacher1996}. Data processing inequalities are always respected for Markovian processes---a subset of the multi-time process where the environment does not retain any memory of the system. More generally, Markovian processes can exhibit information inequalities which are not equivalent to the standard data processing inequalities~\cite{capela2020monogamy,capela2022quantum}. For example, in Ref.~\cite{capela2022quantum}, violation of such general data processing inequalities sufficiently guarantees the dynamics are non-Markovian, even when the dynamics respect the standard data processing inequalities. Although entropic inequalities for Markovian processes have been explored in~\cite{capela2020monogamy,capela2022quantum}, similar entropic inequalities constituting a sufficient condition to certify processes beyond a fixed causal ordered process have been lacking. 

We bridge this gap by addressing the following questions: What information-theoretic conditions would emerge from the most general fixed causal ordered process? How well can those information inequalities certify processes which do not exhibit a fixed causal order? We answer this by developing a quantum information inequality which holds for every quantum non-Markovian process. The key ingredient to derive the entropic inequalities is as follows: first, we feed the subsystem of maximally entangled states to the process. With this construction, the reduced local channel on the environment guarantees that the input maximally mixed state maps to the output maximally mixed state. Now, as a consequence monotonicity of quantum relative entropy, the von Neumann entropy of the states evolving through this channel must be monotonically increasing with a suitable dimension-dependent lower bound. Using this fact, and considering that the von Neumann entropies are the same across arbitrary bipartitions of a pure state, we obtain the desired entropic inequalities involving only accessible systems and their local dimensions. Our proposed entropic inequality provides a sufficient condition for the certification of causally separable and indefinite causal ordered processes. As an example, we show the violation of the inequalities for an important class of indefinite causal ordered processes: the quantum switch~\cite{chiribella2013quantum}. 


Our work differs from the previously developed certification methods~\cite{oreshkov2012quantum,branciard2015simplest,dourdent2022semi,van2023device}, which are intended to identify indefinite causal ordered processes only. In contrast, our witnesses are sensitive to both indefinite causal order and causally separable processes. An example where such certification of causally separable processes could be relevant is when we want to identify a noisy quantum channel, due to an underlying causally separable process affecting the overall quantum capacity \cite{jia2019causal}. Moreover, our inequalities are entropic in nature, which can potentially lead to important operational tasks involving different causal structures, similar to how the entropic version of the Heisenberg uncertainty principle has been crucial in quantum cryptography~\cite{Coles2017}.

Currently, our entropic inequalities are device-dependent in nature. However, we open up possibilities of developing entropic semi-device-dependent or device-independent inequalities, identifying beyond fixed causal ordered processes, in a similar spirit to semi-device-dependent~\cite {dourdent2022semi} and device-independent~\cite {van2023device, Dourdent2024} causal inequalities.

The paper is organized as follows. In Sec.~\ref {sec_2}, we present the background on process formalism and information inequalities. In Sec.~\ref{sec_witnesses} we present the main result of the paper, the quantum data-processing conditions on fixed causal ordered processes. Moreover, we present applications of the information inequalities on the certification of indefinite causal order. In Sec .~\ref {sec_4}, we consider how the information inequality developed here, along with strong subadditivity of quantum entropy, implies useful witnesses of the departure from the fixed causal order from marginal information. In Sec .~\ref {sec_5}, we show how different entropy measures can be used in the development of similar data-processing inequalities. We focus our attention on the quantum R\'enyi entropies, thus generalizing our results presented in Sec.~\ref{sec_witnesses}. The use of R\'enyi entropies also implies advantages in the certification of departure from the fixed causal order in the quantum switch when compared with the von Neumann entropy. Finally, we present some remarks on future directions of research following the results in this paper in Sec.~\ref{sec_6}.

\section{Preliminaries} \label{sec_2}

\subsection{The quantum process formalism}
Every quantum system $A$ is associated with a Hilbert space $\mathcal{H}_A$. From now on, we denote the quantum system and its associated Hilbert space with the same letter.  We consider finite-dimensional Hilbert spaces only, and denote the dimension of a system $A$ with $\mathrm{dim}(A)$. The set of linear bounded operators on $A$ is denoted with $L(A)$. In case of multiple systems, $A_1, A_2, \cdots, A_n$, we represent the set as $L(A_1{\otimes}A_2{\otimes}\cdots{\otimes}A_n)$. The identity operator is denoted as $\mathds{1}$. The quantum state $\rho \in L(A)$ is a positive semidefinite operator with a unit trace. The set of linear transformations with input system $A$ and output system $B$ is denoted with $L(A,B)$. Any linear transformation $\mathcal{T} \in L(A,B)$ can be equivalently represented via its Choi-Jamio{\l}kowski (CJ) vector in $L(A {\otimes} B)$~\cite{jamio72, choi75,choi1975completely}:


\begin{equation} \label{Vec}
 \Ket{\mathcal{T}} = \sum_i \ket{i} \otimes \mathcal{T} \ket{i},
\end{equation}
where $\{\ket{i}\}$ is an orthonormal basis of the input system $A$.


The collection of linear maps $\Lambda:L(A) \rightarrow L(B)$ with input system $A$ and output system $B$ is denoted with $T(A,B)$. The identity map is denoted as $\mathrm{id}$. In turn, a linear map $\Lambda$ in $T(A,B)$ is isomorphic to linear operators in $L(A, B)$ due to its CJ representation~\cite{jamio72, choi75,choi1975completely}:

\begin{equation} \label{Choi}
    J(\Lambda)=\sum_{i,j} \ketbra{i}{j} \otimes \Lambda(\ketbra{i}{j}),
\end{equation}

A linear map $\Lambda:L(A)\to L(B)$ represents a quantum operation if and only if it is a completely positive and trace-nonincreasing (CPTNI) map. The complete positivity condition of $\Lambda$ is equivalent to the positivity of its CJ representation: $J(\Lambda) {\ge} 0$, and the trace-non-increasing condition is equivalent to $\Tr_{B}J(\Lambda)\le \mathds{1}_A$. A quantum channel, $\Lambda$, is a quantum operation which is a complete positive and trace preserving (CPTP) map, where the trace preserving condition is equivalent to $\Tr_{B}J(\Lambda)= \mathds{1}_A$.



Finally, a linear supermap is defined as a linear transformation $\Upsilon : T(A_0,A_1)\rightarrow T(P,F)$ mapping linear maps with input system $A_0$ and output system $A_1$ into linear maps with global past $P$ and global future $F$. The identity supermap is denoted as $\mathbf{id}$. A  bilinear supermap is similarly defined as a bilinear transformation $\Upsilon : T(A_0,A_1) \times T(B_0,B_1) \rightarrow T(P,F)$, i.e, it transforms a pair of linear maps into an output linear map. The definition of multilinear supermaps is a direct extension of the bilinear ones: a transformation of several input linear maps into an output linear map. 

\begin{figure}[t!] 
\centering   

\includegraphics[width=0.45\textwidth]{./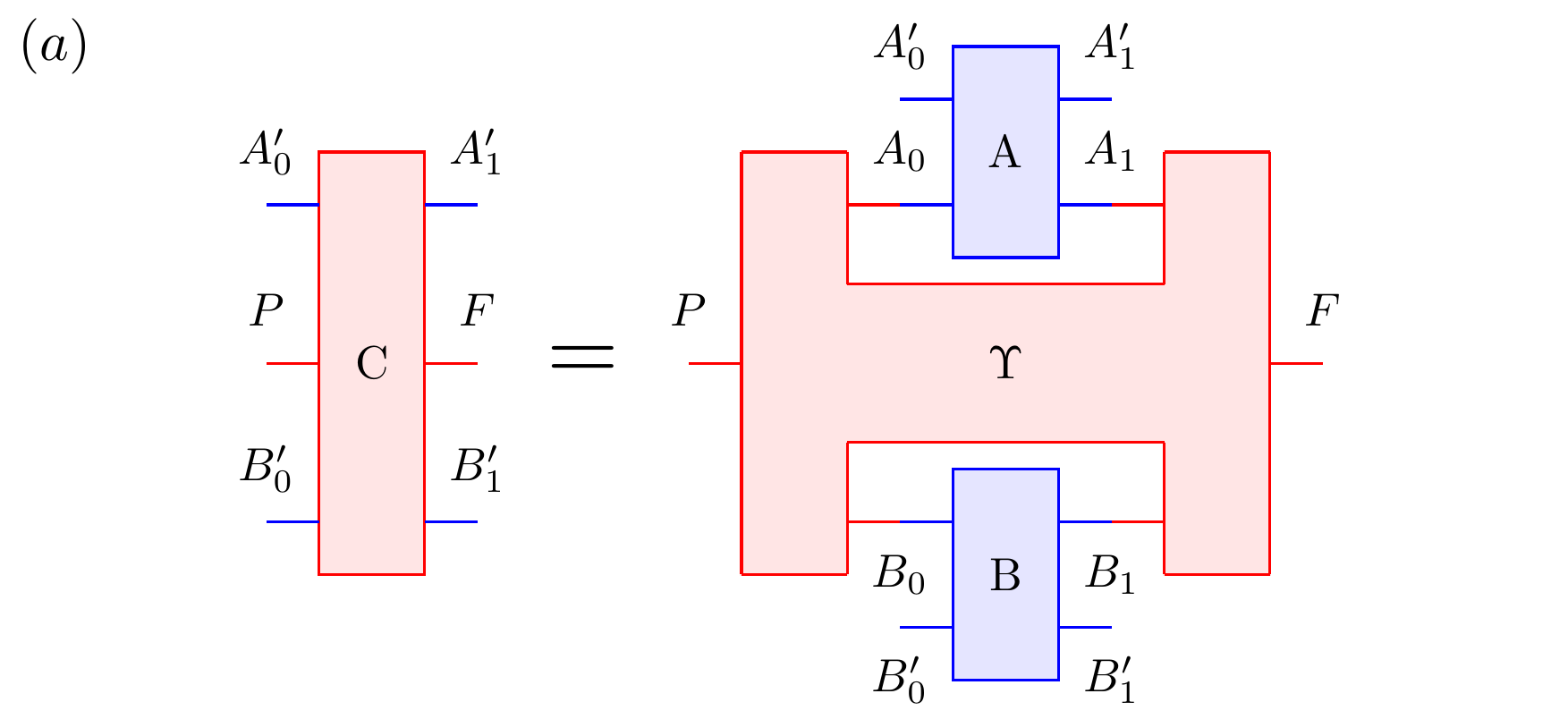} 

\includegraphics[width=0.45\textwidth]{./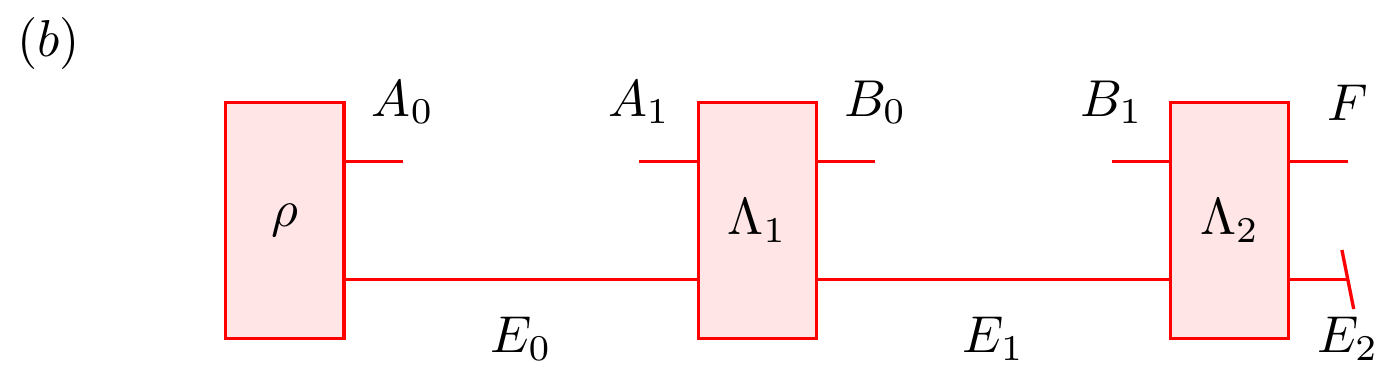} 

\includegraphics[width=0.45\textwidth]{./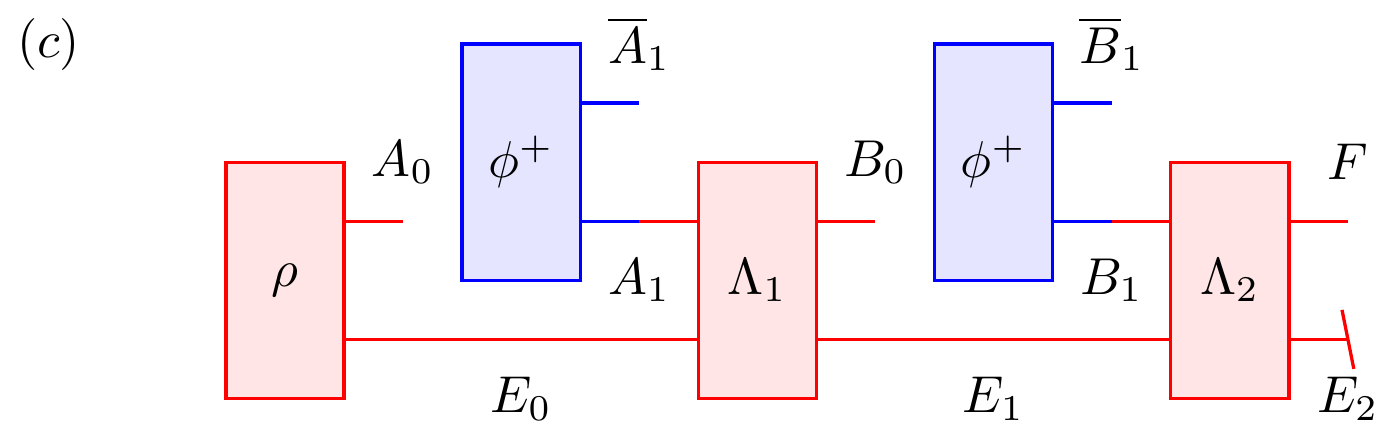}

   \caption{Quantum processes and interventions on non-Markovian processes for the definition of information inequalities. Fig.~(a) shows a quantum process $\Upsilon$ transforms quantum operations $\mathrm{A}$ and $\mathrm{B}$ to a quantum operation from $L(A_0' \otimes P\otimes B_0')$ to $L(A_1' \otimes F\otimes B_1')$, as defined in Eq.~(\ref{quantum_process}). The action of the quantum process $\Upsilon$ on the auxiliary systems $A_0', A_1', B_0', B_1'$ is trivial, in other words on those systems the identity supermap is being applied. Fig.~(b) is a pictorial representation of an arbitrary non-Markovian process: it consists of an initial bipartite system-environment quantum state preparation $\rho$ followed by global quantum transformations $\Lambda_1$ and $\Lambda_2$. The empty spot between the quantum systems $A_0$ and $A_1$ ($B_0$ and $B_1$) represents Alice's (Bob's) local laboratory (Bob). Fig.~(c) shows a diagrammatic representation of the intervention set by Alice and Bob for which the information inequalities in Theorem~\ref{main_thm} hold. Essentially, the interventions are done by respectively feeding the subsystem $A_1$ and $B_1$ of the maximally entangled states $\phi^+_{A_1\bar{A}_1}$ and $\phi^+_{B_1\bar{B}_1}$ in the process. After the interventions, the output is a quantum state (denoted as $\tau$) of the global system $A_0\otimes \bar{A}_1 \otimes B_0\otimes \bar{B}_1 \otimes F$. }
    \label{figure2}
\end{figure}

An $N$-partite quantum process is a multilinear supermap that acts on parts of $N$ quantum channels and results in an output quantum channel. Specifically, let us consider the bipartite case. A bilinear supermap $\Upsilon : T(A_0,A_1) \times T(B_0,B_1) \rightarrow T(P,F)$ is a bipartite quantum process if 
\begin{equation}\label{quantum_process}
 (\mathbf{id}_{A_0' \rightarrow A_1'} \otimes \Upsilon \otimes \mathbf{id}_{B_0' \rightarrow B_1'}) (\mathrm{A},\mathrm{B}) = \mathrm{C}^{\Upsilon, \mathrm{A}, \mathrm{B}}, 
\end{equation}
where $\mathrm{C}^{\Upsilon, \mathrm{A}, \mathrm{B}}$ is a quantum channel in $T(A_0'\otimes P \otimes B_0', A_1'\otimes F \otimes B_1')$ for every pair of quantum channels 
\begin{eqnarray*}
    \mathrm{A}:L(A_0\otimes A_0')\rightarrow L(A_1\otimes A_1'), \\ 
    \mathrm{B}:L(B_0\otimes B_0')\rightarrow L(B_1\otimes B_1').
\end{eqnarray*}
Note that there is no restriction on the dimensions for the quantum systems $A_0',A_1',B_0',B_1'$. In Fig.~\ref{figure2}(a), we show the diagrammatic representation of Eq.~(\ref{quantum_process}). 

The spots for the intervention (shown in the blue shaded regions in Fig.~\ref{figure2}) performed by quantum operations on quantum processes are interpreted as the possible experiments undertaken in local laboratories. The quantum system $A_0$ $(B_0)$ is the input system to the interventions taking place in the laboratory of the part $A$ $(B)$, while the quantum system $A_1$ $(B_1)$ is the corresponding output for the interventions. We also refer here to the parties $A$ and $B$ as Alice and Bob, respectively. Note that, in general, the interventions can be appropriately defined by a collection of CPTNI maps with the sum equal to a CPTP map. 

Interestingly, the theory of multipartite quantum processes is compatible with the existence of processes with an indefinite causal structure. It follows a basic definition concerning processes for which one could reasonably assign a causal order between the local laboratories.

\begin{definition} \label{def_sep} Let $E_j,A_k,B_l$ denote quantum systems with arbitrary dimensions with $j\in \{0,1,2\}$ and $k,l\in \{0,1\}$. Then the following holds:
\begin{enumerate}
    \item A bipartite process $\Upsilon$ is a process with fixed causal order $A \preceq B$ whenever there are a fixed quantum state $\rho \in L(A_0 \otimes E_0)$ and quantum operations $\Lambda_1:L(A_1 \otimes E_0) \rightarrow L(B_0 \otimes E_1)$ and $\Lambda_2:L( B_1 \otimes E_1) \rightarrow L(F \otimes E_2)$ for which holds 
      \begin{multline} \label{non-Markov}
      \Upsilon(\Phi_A,\Phi_B) =  \Tr_{E_2}[(\Lambda_2)(\Phi_B \otimes \mathrm{id}_{E_1})(\Lambda_1)(\Phi_A \otimes \mathrm{id}_{E_0} )(\rho)],
     \end{multline}
for arbitrary quantum channels $\Phi_A : L(A_0) \rightarrow L(A_1)$ and $\Phi_B : L(B_0) \rightarrow L(B_1)$.
   
    \item A process is called causally separable if it is a probabilistic combination of processes with distinct fixed causal orders. Precisely, a process is causally separable and denoted as $\Upsilon_{\mathrm{sep}}$ if there are processes $\Upsilon_{A \preceq B}$ and $\Upsilon_{B \preceq A}$ for which
    \begin{equation}
    \Upsilon_{\mathrm{sep}}= q\Upsilon_{A \preceq B}+(1-q)\Upsilon_{B \preceq A},
    \end{equation}
    with probability $0 \leq q\leq 1$.
\end{enumerate}
\end{definition}

Fig.~\ref{figure2}(b) provides an illustration of a fixed causal order in agreement with Definition~\ref{def_sep}. While the fixed causal ordered processes can include Markovian processes as well, the most general fixed causal ordered processes correspond to non-Markovian processes. Hence, from now on, we use the term `non-Markovian' and `fixed causal ordered' processes interchangeably.  When we say a process is `beyond non-Markovian', we refer to causally separable and indefinite causal ordered processes. In particular, the most general fixed causal order processes are implemented by a bipartite system-environment time-evolution of an initial quantum state, allowing the local laboratories to implement interventions on the system at fixed different instants of time, and furthermore, are quantum processes with a trivial global past system. In that sense, multipartite quantum processes of the non-Markovian form are also called multitime quantum stochastic processes. The reader is referred to Ref.~\cite{milz2021quantum} for a recent review on the topic, and Refs.~\cite{Giarmatzi_2021,Nery_2021,Taranto_2024,goswami2024hamiltonian} for different types of non-Markovianity. In turn, causally separable processes are probabilistic mixtures of quantum non-Markovian processes with different causal orders. The reader is also urged to follow Ref.~\cite{branciard2016witnesses} for a detailed discussion on the role of separability in the process formalism.

In what follows, it is very convenient to consider the representation of quantum processes provided by the process matrix formalism. To every bipartite quantum process $\Upsilon$ with local laboratories $A$ and $B$ it is associated a process matrix $W$ in $L(P  \otimes A_0\otimes A_1\otimes B_0 \otimes B_1 \otimes F)$. The process matrix is defined in such a way that for arbitrary quantum operations $\mathrm{A}$ and $\mathrm{B}$, we have \cite{oreshkov2012quantum}
\begin{multline}
J(\Upsilon(\mathrm{A},\mathrm{B}))= \Tr_{AB}[W^{T_{AB}}(\mathds{1}_{P}\otimes J(\mathrm{A})\otimes J(\mathrm{B}) \otimes \mathds{1}_{F})],
\end{multline}
where $T_{AB}$ denotes the partial transposition with respect to the systems $A_0,A_1,B_0,B_1$. In order for a given process matrix to represent a valid quantum process, it must respect conditions which can be found in Ref.~\cite{giarmatzi2021witnessing}. A quantum process $\Upsilon$ with rank-1 process matrix $W=\proj{w}$ is called a pure process, and $\ket{w}$ is called the process vector.

\subsection{Information inequalities}
Data processing inequalities stand as crucial information limitations, providing the mathematical tools involved in the proof of coding theorems since the early stages of classical and quantum Shannon theories. For instance, the converse statement of the channel-coding theorem, reported in the seminal work of Claude E. Shannon in Ref.~\cite{shannon1948mathematical}, can be proved with the use of data processing inequalities \cite{yeung2008information}: there is no asymptotic block encoding-decoding scheme of a memoryless channel for which the communication rate exceeds the channel capacity with arbitrarily small error probability. Similarly, the goal of this study is to provide further understanding of the information limitations in the realm of quantum processing.

Classical data processing inequalities hold as follows. Let $X_1 \rightarrow X_2 \rightarrow X_3$ denote a classical Markovian process, that is, a collection of classical random variables for which the probability distribution of $X_3$ might depend conditionally upon the variable $X_2$, nevertheless, it does not depend upon $X_1$. Intuitively, one can understand such a process as the observation of a quantity $X$ in three discrete and different instances, in such a way that the information on the future observation $X_3$ is fully contained in the present $X_2$, while being independent of its past state $X_1$. Furthermore, it holds the information inequalities~\cite{yeung2008information}
\begin{eqnarray}
    I(X_1;X_2) &\geq&  I(X_1;X_3), \label{cdpi1}\\
    I(X_2;X_3) &\geq&  I(X_1;X_3), \label{cdpi2}
\end{eqnarray}
where $I(X_r ; X_s)$ denotes the mutual information of random variables $X_r$ and $X_s$, with $r,s=1,2,3$. It is worth noting that for longer classical Markovian processes, there are classical information inequalities which are not equivalent to the couple above. We refer the reader to Ref.~\cite{capela2020monogamy,capela2022quantum} -- and the references therein -- in order to find detailed information on classical data processing inequalities.  

Data processing inequalities are useful witnesses on non-Markovian behavior from marginal information as well. That is, in order to assert a given classical process drawn from a probability distribution $p(X_1,X_2,X_3)$ is non-Markovian, one has to check that the Markov condition fails: $p(X_3|X_1,X_2) \neq p(X_3|X_2)$. Furthermore, it shows that the knowledge of the joint probability distribution of the random variables $X_1,X_2,X_3$ is necessary. Nevertheless, the data processing inequalities in Eqs.~(\ref{cdpi1},\ref{cdpi2}) work as witnesses on non-Markovianity from limited knowledge on the process' probability distribution: it might still be possible to check non-Markovianity from the violation of information inequalities with respect to the marginal probability distribution of pairs of variables of the process only. We refer the reader to Ref.~\cite{capela2020monogamy} for a detailed discussion on this application of data processing inequalities.

Now, the limitations imposed by quantum processes take place as central results in the theory of quantum information \cite{watrous2018theory}. For instance, the quantum data-processing inequality in terms of quantum coherent information reported in Ref.~\cite{schumacher1996quantum} is a cornerstone in quantum information theory. Importantly, it follows from such a consideration the necessary and sufficient conditions on the input message, represented with a quantum state, and on the noisy quantum channel through which quantum information goes, for which the receiver can perfectly error-correct the output signal. Quantum data processing inequalities can also be formulated with respect to distinct information measures, such as the quantum mutual information, as it can be found in Refs.~\cite{buscemi2014complete,hayden2004structure}. 

Quantum data processing inequalities can also serve as important witnesses on quantum non-Markovian phenomena \cite{capela2022quantum}, that is, the multi-time correlations emerging from system-environment dynamics \cite{pollock2018non}. Similarly to their classical counterpart, it is possible with the assistance of quantum information inequalities to witness quantum non-Markovianity with interventions on the system-part only. 

Although system-environment dynamics is a powerful model allowing for the description and implementation of many interesting quantum mechanical effects \cite{taranto2020exponential,taranto2021non}, recent studies on quantum information theory have culminated in a novel perspective on how to deal with the possible quantum processes \cite{chiribella2008transforming,chiribella2013quantum}, and importantly, predicting the existence of valid process in nature beyond this model \cite{oreshkov2012quantum}. Inspired by the interventional approach, in such a way that quantum experiments performed by local agents take place as probabilistic quantum operations called quantum instruments, quantum processes are represented as higher-order quantum operations \cite{oreshkov2012quantum,araujo2015witnessing}. Under a similar framework, a novel platform for the study of quantum non-Markovian phenomena has emerged, allowing for further investigation and understanding in the area \cite{milz2019completely,milz2020kolmogorov,taranto2019quantum}. We refer the reader to Ref.~\cite{milz2021quantum} for a review on the interventional perspective on quantum non-Markovianity. 

Importantly, quantum processes in a scenario with several parties can exist in such a way that they are not just a conventional time-ordered quantum circuit connecting the operations implemented by the local agents \cite{oreshkov2012quantum}. In other terms, there are quantum processes which go beyond the non-Markovian model, and are termed as processes with indefinite causal order. Furthermore, for multi-party quantum processes, non-Markovianity is not the most extreme case. 

Then, what would be the data processing conditions emerging from non-Markovianity? Would the corresponding information inequalities be sharp enough in order to certify that a given process is beyond non-Markovianity? Now we are ready to consider the main question addressed here: given a bipartite process $\Upsilon$, how can one certify it is not compatible with a fixed causal structure by defining an intervention on the accessible quantum systems $A_0,A_1$ and $B_0,B_1$? The problem, which can be easily generalized to a multi-party scenario, is addressed in the following section.

\bigskip

\section{Results} 

\label{sec_witnesses}

The goal of this section is to set up methods to determine whether a given quantum process is not compatible with a fixed causal structure. Let us consider, in the subsequent discussion, without loss of generality, that $\Upsilon$ is a bipartite quantum process with global future $F$. Furthermore, we aim to determine that $\Upsilon$ is neither of the form $A \preceq B$ nor  $B \preceq A$. 

\subsection{Information inequalities for non-Markovian quantum processes}

A possible way to address such a problem is to design data-processing conditions as similarly done in Refs.~\cite{capela2020monogamy,capela2022quantum}. Information inequalities work as necessary conditions on processes of a particular form, and thus, a violation of them asserts that the given process is not of the form considered. In this sense, the following information conditions are proposed in order to witness processes beyond non-Markovianity in the more general setting proposed by the process formalism.

\begin{theorem}\label{main_thm} Let  $\Upsilon_{A \preceq B}$ be a bipartite non-Markovian process with global future $F$.
Take the quantum state preparation $\tau$ for the joint system $A_0\otimes A_1\otimes B_0\otimes B_1\otimes F$ as the outcome from the intervention on the process defined by feeding each of the systems $A_1$ and $B_1$ with half of a maximally entangled state (see Fig.~\ref{figure2}(c) for a pictorial representation). Then, it holds the information inequality
\begin{equation}\label{ineq_ent}
    H(A_0A_1B_0B_1F)_{\tau} -H(A_0A_1B_0)_{\tau} \geq  \log_2 \frac{\dim(B_1)}{\dim(F)},
\end{equation}
where the entropy of a generic quantum system X prepared in the state $\rho$ is defined with $H(X)_{\rho}=-\Tr [\rho \log_2 \rho]$.
\end{theorem}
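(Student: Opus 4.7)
Rearranging, the desired inequality reads
\begin{equation*}
H(A_0A_1B_0B_1F)_\tau \;\geq\; H(A_0A_1B_0)_\tau + \log_2\dim(B_1) - \log_2\dim(F).
\end{equation*}
My plan is to assemble the right-hand side in two steps: pick up the $+\log_2\dim(B_1)$ from Bob's maximally entangled preparation, and absorb the $-\log_2\dim(F)$ through the Araki--Lieb triangle inequality.

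First, I would unfold $\Upsilon_{A\preceq B}$ as in Definition~\ref{def_sep}: prepare $\rho$ on $A_0E_0$, tensor in Alice's $\phi^+_{A_1^{\text{in}}A_1}$, apply $\Lambda_1:L(A_1^{\text{in}}E_0)\to L(B_0E_1)$, tensor in Bob's $\phi^+_{B_1^{\text{in}}B_1}$, apply $\Lambda_2:L(B_1^{\text{in}}E_1)\to L(FE_2)$, and trace out $E_2$. (Matching the theorem, $A_1$ and $B_1$ label the halves kept by the parties, while the superscripts ``in'' denote the halves fed into the process.) To compute the marginal $\tau_{A_0A_1B_0B_1}$ I trace $F$ out of $\tau$; trace preservation of $\Lambda_2$ makes this identical to tracing $B_1^{\text{in}}$ and $E_1$ out of the pre-$\Lambda_2$ state $\sigma_{A_0A_1B_0E_1}\otimes\phi^+_{B_1^{\text{in}}B_1}$. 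Since $B_1^{\text{in}}$ appears only in the second factor, the trace yields the factorisation
\begin{equation*}
\tau_{A_0A_1B_0B_1}=\tau_{A_0A_1B_0}\otimes\frac{\mathds{1}_{B_1}}{\dim(B_1)},
\end{equation*}
and hence $H(A_0A_1B_0B_1)_\tau=H(A_0A_1B_0)_\tau+\log_2\dim(B_1)$.

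For the second step, the Araki--Lieb triangle inequality applied to the bipartition $(A_0A_1B_0B_1)\,|\,F$ of $\tau$ gives $H(A_0A_1B_0B_1F)_\tau\geq H(A_0A_1B_0B_1)_\tau-H(F)_\tau$, after which $H(F)_\tau\leq\log_2\dim(F)$ completes the estimate. The only delicate point is the factorisation of $\tau_{A_0A_1B_0B_1}$, which requires careful tracking of which systems sit inside versus outside the process at each moment and invokes trace preservation of $\Lambda_2$; everything else is textbook manipulation of von Neumann entropy.
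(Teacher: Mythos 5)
Your proof is correct, but it takes a genuinely different route from the paper's. The paper first dilates the process to pure state preparations and unitaries (Lemma~\ref{lemma_pure_nm}), then observes that the induced environment channel $Q_1\to Q_2$ (tensor in $\omega_{B_1}$, apply $\mathrm{U}_2$, trace out $F$) preserves maximally mixed states, applies a relative-entropy monotonicity bound with a dimension correction (Lemma~\ref{lemma_dpi}), and finally uses purity of the global state to trade the inaccessible environment entropies $H(Q_1),H(Q_2)$ for the accessible ones $H(A_0A_1B_0)$ and $H(A_0A_1B_0B_1F)$. You instead stay entirely on the accessible systems: the exact factorisation $\tau_{A_0A_1B_0B_1}=\tau_{A_0A_1B_0}\otimes\omega_{B_1}$ (which is correct --- tracing out $F$ after the trace-preserving $\Lambda_2$ is the same as tracing out $B_1^{\text{in}}E_1$ before it, and $B_1$ is entangled only with $B_1^{\text{in}}$) supplies the $+\log_2\dim(B_1)$, and Araki--Lieb plus $H(F)\le\log_2\dim(F)$ supplies the $-\log_2\dim(F)$. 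Your version is shorter and avoids the purification machinery altogether; your flagged ``delicate point'' about trace preservation of $\Lambda_2$ is unproblematic, since the paper itself assumes CPTP operations when it dilates them to unitaries. What the paper's longer route buys is generality: it isolates exactly two properties of the entropy functional --- monotonicity under maximally-mixed-state-preserving channels with the $\log(\dim(B)/\dim(A))$ correction, and equality of entropies across bipartitions of a pure state --- and this is precisely what lets Sec.~\ref{sec_5} port the inequality to R\'enyi, min- and max-entropies. The Araki--Lieb triangle inequality is a von Neumann--specific tool (it rests on subadditivity, which fails for general R\'enyi entropies), so your argument does not transfer to those generalized entropies without replacing that step.
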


Furthermore, it follows as a simple consequence that whenever the dimension of Bob's output is not smaller than the dimension of the global future, i.e., $\dim (B_1)\ge \dim(F)$, with respect to a non-Markovian process of the form $A \preceq B$, then, the following holds:
\begin{equation} \label{entropy_increase}
    H(A_0A_1B_0B_1F)_{\tau} \geq H(A_0A_1B_0)_{\tau},
\end{equation}
where $\tau$ has been defined in the statement of Theorem~\ref{main_thm}.

The proof of the result above can be found in detail in Appendix \ref{appendix_proof}. Importantly, the crucial application behind it is that one can build up witnesses that a given quantum process is not in the set of non-Markovian ones. For that sake, let $\Upsilon$ denote a bipartite quantum process with global future $F$ and local agents $A$ and $B$. Thus, in case the process under consideration is non-Markovian with the part $A$ in the causal past of $B$, then it holds that:
\begin{align}
     \textrm{DP}_{\Upsilon}^{A\preceq B} &\coloneqq  H(A_0A_1B_0B_1F)_{\tau} - H(A_0A_1B_0)_{\tau} \label{dpAB1} \\
    &\geq \log_2 \frac{\dim(B_1)}{\dim(F)}. \label{dpAB2}
\end{align}
The lower bound stated above shows that the quantity defined in Eq.~(\ref{dpAB1}) is a witness on indefinite causal order: a quantum process $\Upsilon$ with $\textrm{DP}$ smaller than $\log_2 \dim(B_1) - \log_2 \dim(F)$ necessarily violates the condition $A \preceq B$.

Now, it is clear that interchanging the roles of the local agents $A$ and $B$ in Theorem~\ref{main_thm}, we have similar conclusions for non-Markovian processes with $B$ in the local past of $A$. That is, it follows that
we must have the validity of the data-processing condition
\begin{equation}\label{ineq_ent_2}  H(A_0A_1B_0B_1F)_{\tau} -H(B_0B_1A_0)_{\tau} \geq \log_2 \frac{\dim(A_1)}{\dim(F)}.
\end{equation}
Note that the interventional approach adopted in the preparation of $\tau$ is independent of the causal order between the local agents.

In turn, in case $\Upsilon$ is a non-Markovian process of the form $B \preceq A$, it holds that
\begin{align}
     \textrm{DP}_{\Upsilon}^{B\preceq A} &\coloneqq  H(A_0A_1B_0B_1F)_{\tau} - H(B_0B_1A_0)_{\tau} \label{dpBA1} \\
    & \geq  \log_2 \frac{\dim(A_1)}{\dim(F)}. \label{dpBA2}
\end{align}

Importantly, a violation of both conditions expressed in Eqs.~\eqref{dpAB2}, and \eqref{dpBA2} proves the corresponding quantum process is not compatible with a fixed causal structure.

\begin{figure}[h!] 
\centering   

\includegraphics[width=0.45\textwidth]{./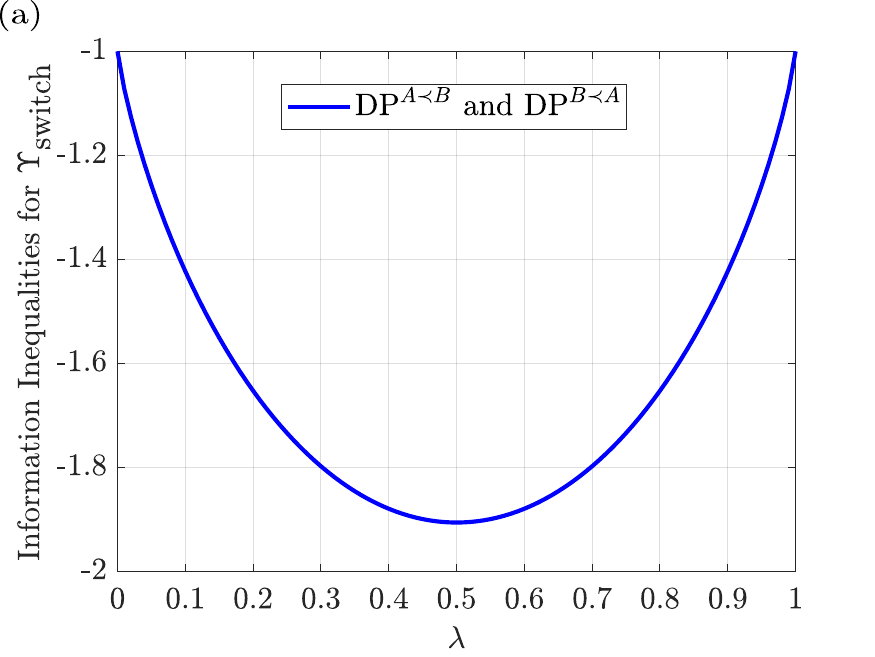} 

\includegraphics[width=0.45\textwidth]{./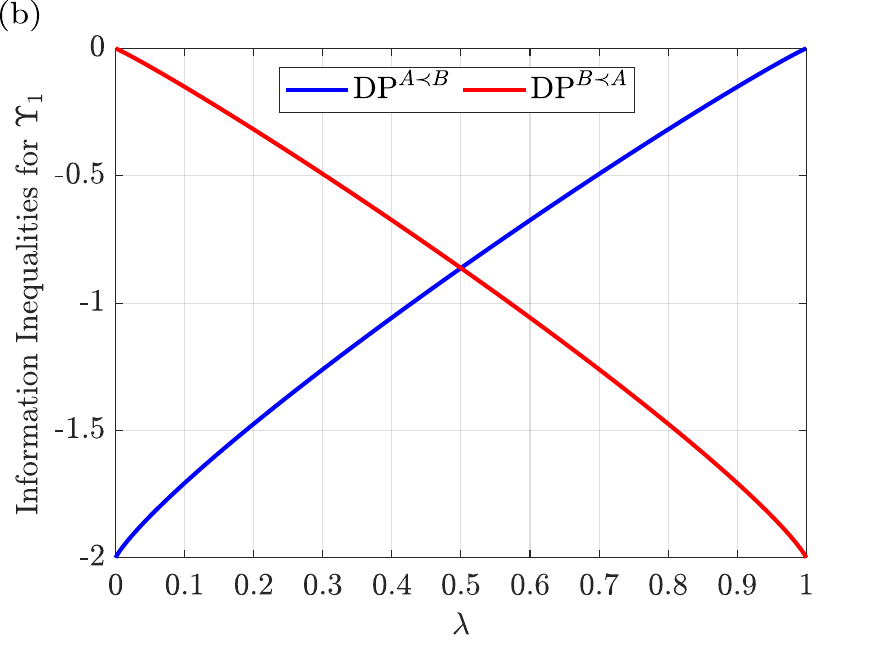} 

\includegraphics[width=0.45\textwidth]{./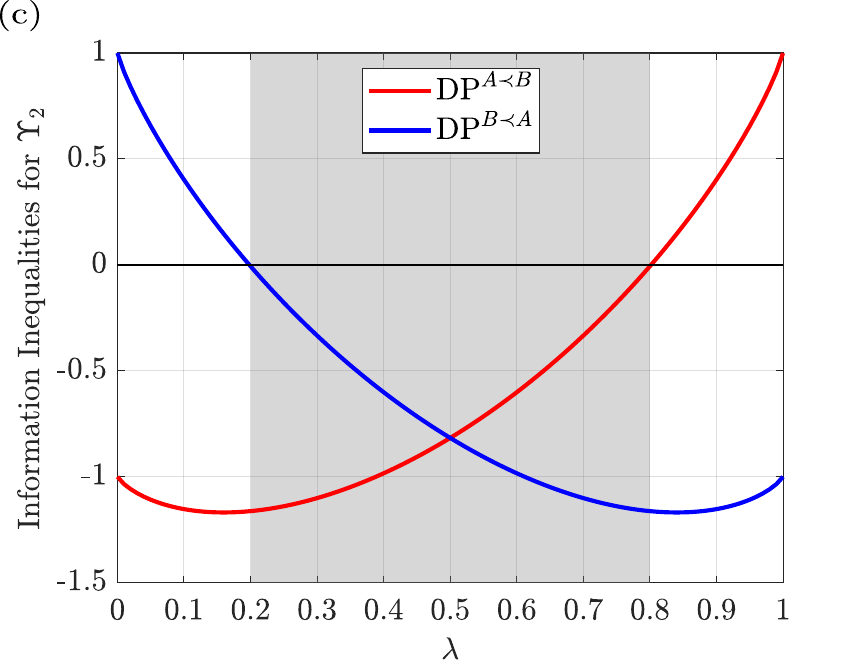} 

    \caption{Violation of the information inequalities for the quantum switch: the plots show the witnesses in Eqs.~(\ref{dpAB1},\ref{dpAB2},\ref{dpBA1},\ref{dpBA2}) as a function of the input control system state parameter $\lambda$. Plot (a) corresponds to the quantum switch (with target and control). Plot (b) considers the quantum switch with output control system $C_1$ traced out ($\Upsilon_1$ defined in Eq.~(\ref{separable_process})), while plot (c) considers the output target system $T_1$ traced out ($\Upsilon_2$ Here, the information inequalities witness that the process is beyond non-Markovianity only in the shaded region.}
    \label{plot_1}
\end{figure}

\subsection{Violation of the information inequalities from processes with indefinite causal order}

Now that we have set up the necessary conditions on quantum processes with a fixed causal order, we are ready to test how efficiently those quantities are in witnessing processes not compatible with non-Markovianity. The case study addressed here is the coherent superposition of causal orders named as a quantum switch in the literature.

\textit{The quantum switch and causally separable processes.---}The quantum switch is a quantum process $\Upsilon_{\textrm{switch}}$ with bipartite global past and future. Its global past is constituted by an input target system denoted with $T_0$ and an input control system $C_0$, and its global future is defined with the output target system $T_1$ and the output control system $C_1$. The action of quantum switch $\Upsilon_{\textrm{switch}}$ on the channels $A: L(A_0)\to L(A_1)$, with Kraus operators $\{\mathsf{A}_i\}_i$ and $B: L(B_0)\to L(B_1)$, with Kraus operators $\{\mathsf{B}_i\}_i$ results in a new channel $\Upsilon_{\textrm{switch}}(A,B): L(T_0 \otimes C_0)\to L(T_1 \otimes C_1)$, with Kraus operators $\{\upsilon_{ij}\}_{ij}$, such that for all quantum states $\rho \in L(T_o)$ and $\varphi \in L(C_0)$, we have
\begin{align}
&\Upsilon_{\textrm{switch}}(A,B)(\rho \otimes \varphi) = \sum_{i,j}\upsilon_{ij} (\rho \otimes \varphi) \upsilon_{ij}^\dagger \ \textrm{where} \label{switch} \ \\
&\upsilon_{ij} \coloneqq \mathsf{B}_i\mathsf{A}_j \otimes \proj{0} + \mathsf{A}_j\mathsf{B}_i \otimes \proj{1}.
\end{align}
Note that discarding the output control system $C_1$ of the quantum switch $\Upsilon_{\textrm{switch}}$ results in a classical mixture of the fixed causal ordered processes  $\Upsilon_1$ such that
\begin{eqnarray}\label{separable_process}
 \Upsilon_1(\mathrm{A},\mathrm{B})&\coloneqq&
 \Tr_{C_1}[\Upsilon_{\textrm{switch}}(\mathrm{A},\mathrm{B})(\rho \otimes \varphi) ]\\ &=& \lambda (\mathrm{B} \circ \mathrm{A}) (\rho)+(1-\lambda)(\mathrm{A} \circ \mathrm{B}) (\rho),
 \end{eqnarray}
for every input target system prepared in the state $\rho$, and the input control system $\varphi:=\proj{\varphi}$ is prepared in the pure state $\ket{\varphi}=\sqrt{\lambda} \ket{0}+\sqrt{1-\lambda} \ket{1}$.

Fig.~\ref{plot_1} presents plots for the witnesses defined in Eqs.~(\ref{dpAB1},\ref{dpBA1}) for three distinct processes with input target $T_0$ system prepared in the pure state $\ket{0}$: (a) the quantum switch defined in Eq.~(\ref{switch}), (b) the separable process defined in Eq.~(\ref{separable_process}), and (c) the quantum switch with the input control system prepared in the quantum state $\ket{\varphi}$ and the output target system $T_1$ discarded:
\begin{equation} \label{switch_C}
    \Upsilon_2(\mathrm{A},\mathrm{B}) \coloneqq 
 \Tr_{T_1}[\Upsilon_{\textrm{switch}}(\mathrm{A},\mathrm{B})(\rho \otimes \varphi) ].
\end{equation}
Note that the local dimensions of the processes in Fig.~\ref{plot_1}(b) and Fig.~\ref{plot_1}(c) are in agreement with the conditions for which Eq.~(\ref{entropy_increase}) holds. 

Importantly, the plots in Fig.~\ref{plot_1}(a) and Fig.~\ref{plot_1}(b) provide a successful certification that the processes $\Upsilon_{\textrm{switch}}$ and $\Upsilon_1$ are neither of the form $A \preceq B$ nor $B \preceq A$, respectively, for all values $0<\lambda<1$; for $\lambda=0,1$ the processes are indeed Markovian, and no violation of the information principles are expected to take place. For the process $\Upsilon_2$, the witnesses are able to prove it as incompatible with a fixed causal order for the range of parameters $0.2 \leq \lambda \leq 0.8$. It is worth to note that no conclusion can be made about the causal structure of $\Upsilon_2$ for the values of $\lambda$ for which are found no violation of the information inequalities.

\begin{figure}[h!] 
\centering   

\includegraphics[width=0.5\textwidth]{./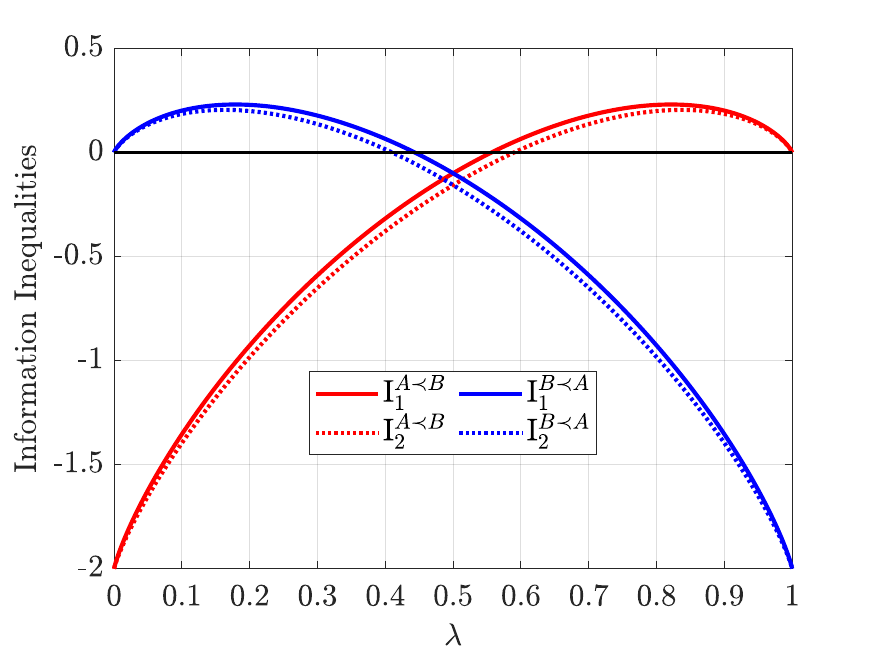}

    \caption{Marginal witnesses $\mathrm{I}_1$ and $\mathrm{I}_2$ for the process with output control system traced out ( $\Upsilon_1$ as in Eq.~\eqref{separable_process}). The plot presents the marginal witnesses for $A \preceq B$ and $B \preceq A$ with respect to the parameter $\lambda$ defining the quantum state preparation $\sqrt{\lambda}\ket{0} + \sqrt{1-\lambda}\ket{1}$ of the initial control system $C_0$. Interestingly, for this particular example, the witness $\mathrm{I_2}$ (dashed lines) provides slightly better results than $\mathrm{I_1}$ (solid lines). A negative value for $\mathrm{I_1}$ or $\mathrm{I_2}$ shows that the quantum separable process $\Upsilon_1$ is not conceivable with a fixed causal structure.}
    \label{plot_marginal}
\end{figure}

\subsection{Data-processing inequalities on non-Markovianity from marginal information} \label{sec_4}

The development of data-processing conditions is ubiquitous in classical and quantum information theory. In particular, it allows us to impose limitations on information processing from the knowledge of marginal distributions only. For the sake of clarity, let us consider a classical three-time-step Markovian process denoted with $X_1 \rightarrow X_2 \rightarrow X_3$. Then, although a complete characterization of the process would necessarily involve the joint probability distribution of the three random variables, bounds on the possible temporal correlations take place in the form of data-processing inequalities. In particular, the marginal distributions for the bipartitions must hold $I(X_1;X_2) \geq I(X_1;X_3)$ and $I(X_2;X_3) \geq I(X_1;X_3)$, where $I(X_i;X_j)$ generically denotes the mutual information of random variables $X_i$ and $X_j$. Furthermore, with access to marginal probability distributions, it is indeed possible to check whether a classical process is non-Markovian given a violation of the inequalities above.

Here, we consider a similar reasoning for the problem addressed: given a quantum process $\Upsilon$ with local laboratories with input and output systems $A_0,B_0$ and $A_1,B_1$, respectively, and global future $F$, can we assign relevant limitations on non-Markovianity from marginal information? We answer this question affirmatively by considering, as an example, the causally separable quantum processes $\Upsilon_1$ defined in Eq.~(\ref{separable_process}). 

In order to define marginal conditions on non-Markovianity, we combine the entropic inequality in Eq.~(\ref{ineq_ent}) along with the strong subadditivity condition for quantum entropy, that is, a tripartite quantum system $X \otimes Y \otimes Z$ must satisfy, regardless of its global quantum state,
\begin{equation} \label{sse}
    H(XY)+H(YZ) \geq H(XYZ)+H(Y).
\end{equation}

Now, let $\Upsilon$ be a bipartite non-Markovian quantum process with local laboratories $A$ and $B$ and global future F. Assume also that Alice and Bob's local interventions satisfy the same conditions stated in Theorem~\ref{main_thm}. Then, in case $A \preceq B$ holds, it follows that the information quantities
\begin{eqnarray}
    \mathrm{I}_1^{A \preceq B} \coloneqq H(B_1|A_0A_1B_0)+H(F|A_1B_1), \\
    \mathrm{I}_2^{A \preceq B} \coloneqq H(A_0F|A_1B_1)+H(B_0|A_1B_1),
\end{eqnarray}
are not smaller than $\log_2 \dim(B_1) - \log_2 \dim(F)$,
where $H(Y|X) \coloneqq H(XY)-H(X)$ is the conditional quantum entropy. Note that the quantum version of conditional entropy may be negative, and therefore, the quantities above are non-trivial witnesses on indefinite causal order. 

Otherwise, in case $B \preceq A$ is valid, we have interchanged the roles of Alice and Bob, so that the information quantities are defined as
\begin{eqnarray}
    \mathrm{I}_1^{B \preceq A} \coloneqq H(A_1|A_0B_0B_1)+H(F|A_1B_1), \\
    \mathrm{I}_2^{B \preceq A} \coloneqq H(B_0F|A_1B_1)+H(A_0|A_1B_1),
\end{eqnarray}
are greater or equal to $\log_2 \dim(A_1) - \log_2 \dim(F)$.

It can be demonstrated that the quantities $\mathrm{I}_1$ and $\mathrm{I}_2$ satisfy the above-mentioned lower bounds with assistance of the elemental condition in Eq.~(\ref{sse}) and the data-processing condition in Theorem~\ref{main_thm}. Let us consider the case $A \preceq B$. Then, $\mathrm{I}_1$ follows from the assignment $X=A_0 \otimes B_0,Y=A_1 \otimes B_1,Z=F$, while $\mathrm{I}_2$ follows from $X=A_0 \otimes F$, $Y=A_1 \otimes B_1$, $Z=B_0$. It is important to note that for the computation of those quantities, it is not necessary to have access to the global interventional state $\tau$, and for that reason, the novel information quantities are called marginal witnesses. Figure~\ref{plot_marginal} shows how the marginal witnesses perform on the certification of indefinite causal order of the quantum processes $\Upsilon_1$.

It is important to note that since strong subadditivity of quantum entropy holds for every state, the marginal witnesses are never more efficient than the information witness defined in Eq.~(\ref{dpAB1}). That can indeed be visualised by comparing the plot in Fig.~\ref{plot_marginal} and the middle picture in Fig.~\ref{plot_1}. Nevertheless, the operational advantage of the marginal witnesses is clear from their definition from subsystems only.

\begin{figure}[t] 
\centering   
\includegraphics[width=0.5\textwidth]{./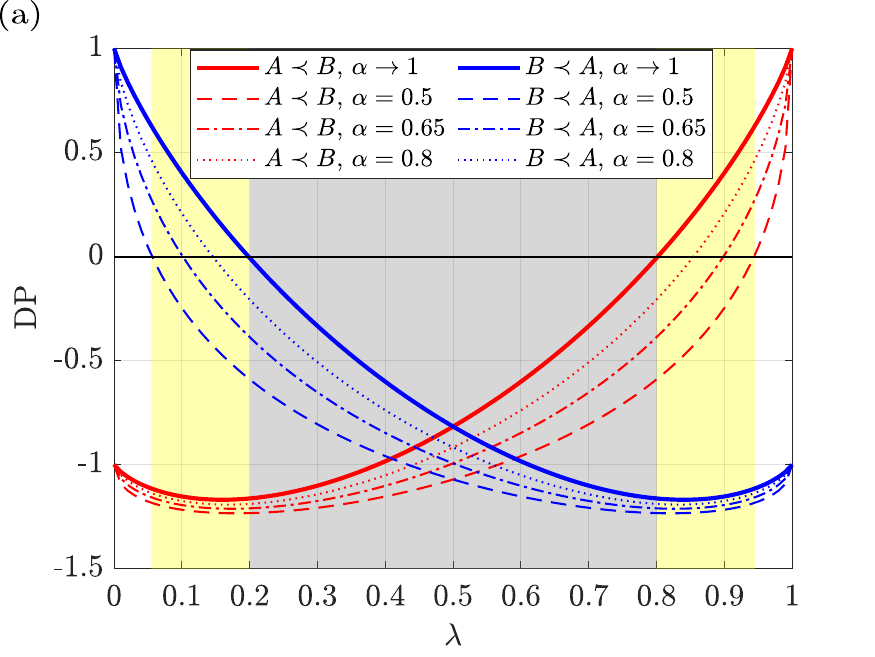} 
\includegraphics[width=0.5\textwidth]{./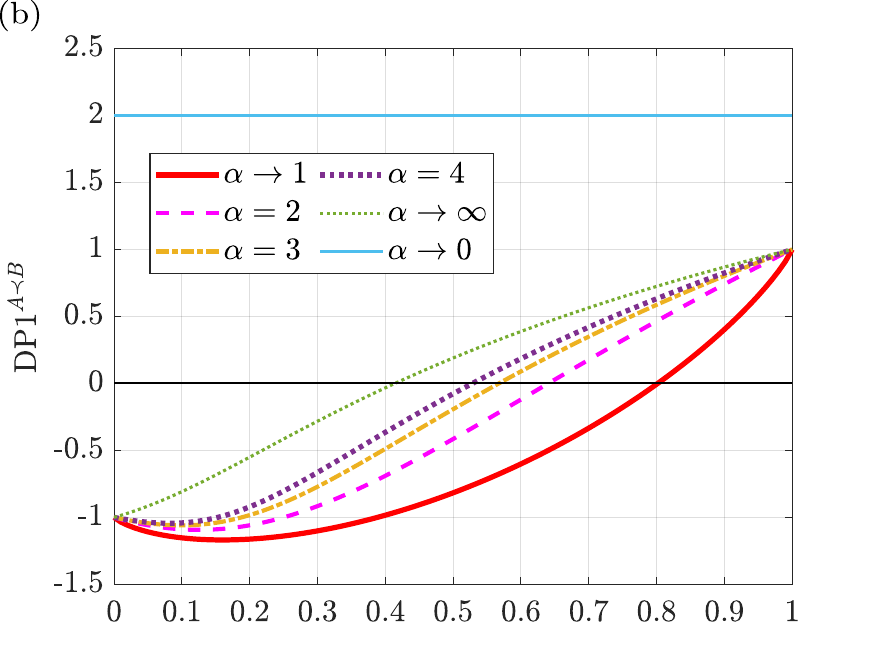} 
    \caption{Rényi information inequalities $\textrm{DP}$ for the quantum switch with output target system discarded ($\Upsilon_2$ as in Eq.~\eqref{switch_C}) as a function of $\lambda$. Plot (a) shows how $\textrm{DP1}$ for $\alpha=0.5,0.65,0.8$. Interestingly, we have an advantage for the region of $\lambda$ for which the certification of classical mixture of causal orders holds (yellow area in the plot). Plot (b) shows that for the values larger than one ($\alpha=2,3,4$ and $\alpha \rightarrow \infty$), the Rényi information inequalities are less sensitive than the one for von Neumann quantum entropy.}
    \label{plot_3}
\end{figure}

\subsection{Witnessing processes beyond non-Markovianity with generalized quantum entropies} \label{sec_5}

The quantum entropy $H(A)_{\rho}$ is a ubiquitous information measure in quantum information theory. For instance, it sets the limit for the compression of a quantum source with state $\rho$ \cite{schumacher1995quantum}. Moreover, information measures related to the quantum entropy are also related to conditions on single-shot quantum error-correction \cite{schumacher1996quantum}. We refer the reader to Refs.~\cite{watrous2018theory,wilde2011classical} for a comprehensive presentation on quantum entropy and its use in quantum information theory.

In spite of the fundamental relevance of quantum entropy in information theory, with respect the goals considered in this study---the certification of processes beyond non-Markovianity,---other information measures could as well provide useful witnesses.

Now, we focus on quantum processes $A \preceq B$ for which the global future is such that $\dim(F)$ is not larger than $\dim B_1$. Similar reasoning holds for processes of the form $A \preceq B$, interchanging $A$ and $B$. The properties of quantum entropy necessary for the validity of Eq.~(\ref{entropy_increase}), as it can be found in Appendix~\ref{appendix_proof}, are as follows:
\begin{enumerate}[(i)]
    \item Monotonicity under the action of quantum operations preserving maximally mixed states: the condition $H(B)_{\Lambda(\rho)} \geq H(A)_{\rho}$ holds as long as $\mathrm{dim}(B) \geq \mathrm{dim}(A)$ and $\Lambda(\omega_A)=\omega_B$, where $\omega_X$ denotes the maximally-mixed state of $X$.
    \item It holds $H(A)_{\psi}=H(B)_{\psi}$ for any bipartite pure state $\psi$ of the quantum system $A \otimes B$.
\end{enumerate}

Furthermore, any information quantity with the properties above could be used in the definition of the information inequalities developed here. In particular, it is worth considering the quantum Rényi entropies, defined for a quantum system $A$ in the state $\rho$ with
\begin{equation}\label{Renyi}
    H_{\alpha}(A)_{\rho}=\frac{1}{1-\alpha}\log_2 \Tr[\rho^{\alpha}],
\end{equation}
for parameters $0 < \alpha < 1$ and $1 < \alpha < \infty$. Importantly, the conditions (i) and (ii) hold for the range of parameter $\frac{1}{2} < \alpha < 1$ and $1 < \alpha < \infty$ \cite{frank2013monotonicity}.

In this section, we show how Rényi entropies might provide advantages in witnessing classical mixture of distinct orders by considering the particular quantum process $\Upsilon_2$, a situation for which the information inequality in Eq.~(\ref{entropy_increase}) fails to perfectly detect it. 

Note that we recover the quantum entropy $H(A)_{\rho}$ from Eq.~(\ref{Renyi}) as $\alpha \rightarrow 1$. On the other hand, the limiting case $\alpha \rightarrow 0$ defines the max-entropy
\begin{equation}
    H_{\mathrm{max}}(A)_{\rho}=\log_2\rank(\rho),
\end{equation}
while the case $\alpha \rightarrow \infty$ defines the min-entropy
\begin{equation}
     H_{\mathrm{min}}(A)_{\rho}=-\log_2 \lVert \rho \rVert,
\end{equation}
where $\lVert \rho \rVert$ denotes the operator norm of $\rho$; the operator norm of a positive-semidefinite operator is equal to its maximum eigenvalue. The max- and min-entropies are also examples of quantum entropies respecting the desired properties \cite{datta2009min}.

The plots in Fig.~\ref{plot_3} shows the values of $\mathrm{DP}^{A \preceq B; \alpha} \coloneqq H_{\alpha}(A_0A_1B_0B_1F)_{\tau} -H_{\alpha}(A_0A_1B_0)_{\tau}$ for different initial control states, that is, for different values of $\lambda$. The quantity  $\mathrm{DP}^{A \preceq B; \alpha}$ is defined similarly replacing $A$ with $B$.  
It is worth mentioning that our goal here is not to be exhaustive with the possibilities of information measures to be addressed, but just to consider alternatives to the traditional quantum entropy.

In that sense, the picture in Fig.~\ref{plot_3}(a) shows that the Rényi information inequalities for the process $\Upsilon_2$, defined in Eq.~(\ref{switch_C}) as a function of the initial control state parameter $\lambda$, presents advantage over the quantum entropy with respect to different values of $\frac{1}{2} \leq \alpha < 1$. On the other hand, the bottom picture in Fig.~\ref{plot_3}(b) shows that the Rényi information inequalities for particular values of $\alpha > 1$ and the min- and max-entropy are less sensitive to indefinite causal order than the information conditions with respect to the ordinary quantum entropy. 

\section{Conclusion} \label{sec_6}
In this article, we propose entropic inequalities which are respected by every fixed causal ordered process, that is, processes represented by quantum combs. The violation of the inequalities is sufficient to certify that the underlying quantum process is incompatible with a fixed causal order---the process is either causally separable or has indefinite causal order. To derive the inequalities, we adopt an interventional approach performed by the local agents of the process.

We have used such a condition to certify a well-known quantum process beyond fixed causal orders: the quantum switch. Taking either the output target or control system as a global future, we were able to certify its departure from a fixed causal order, as discussed in Sec.~\ref{sec_witnesses}. 

Additionally, we also show that strong subadditivity of quantum entropy, combined with the main result in this manuscript, results in a new class of witnesses associated with the marginal process. These inequalities are relevant in the scenario where the process is only partially accessible to the local agents. Finally, we show that similar entropic inequalities hold for $\alpha$-Rényi entropies for a wide range of $\alpha$. We find that, in the range of $\alpha\in [1/2,1)$, the Rényi entropic inequalities are advantageous in the certification of processes beyond fixed causal orders.
 
For future works, the current result could be generalized to multi-partite non-Markovian processes involving other classes of entropic inequality. Furthermore, it will be interesting to develop an operational interpretation of our entropic inequality in Theorem.~\ref{main_thm}, similar to the operational meaning \cite{shannon1948mathematical} of standard data processing inequalities. We refer the interested reader to Refs.~\cite{schumacher1996quantum,hayden2004structure,buscemi2014complete} for a presentation of operational interpretations of data-processing theorems in quantum information theory. 

\section*{Acknowledgments} 
\noindent
This study was financed, in part, by the São Paulo Research Foundation (FAPESP), Brasil. Process Number 2024/06972-9 and Process Number 2022/00209-6. K.G. is supported by the Hong Kong
Research Grant Council (RGC) through grant No. 17307520, John Templeton Foundation through grant 62312, “The Quantum Information Structure of Spacetime” (qiss.fr).

\appendix

\section{Proof of the information condition in Eq.~(\ref{ineq_ent}) for processes with fixed causal order}\label{appendix_proof}

Theorem~\ref{main_thm} depends on the fact that feeding the input system of local laboratories of non-Markovian quantum processes with maximally mixed quantum states forces their environment to undergo an evolution preserving maximally mixed states for all time-steps. In the following, we describe it in detail. Let us first consider a well-known result in the literature upon which its proof is built.

\begin{lemma} \label{lemma_dpi} Let $\Lambda:L(A)\rightarrow L(B)$ be a quantum channel mapping maximally mixed states to maximally mixed states: $\Lambda(\omega_A)=\omega_B$ where $\omega_X=\mathds{1}_X/\dim(X)$ is a maximally mixed state in the system $X$. Then, it holds that
\begin{equation}
    H(B)_{\Lambda(\rho)}-H(A)_{\rho} \geq \log_2 \frac{\dim(B)}{\dim(A)}.
\end{equation}
\end{lemma}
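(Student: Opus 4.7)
The plan is to derive the inequality from the monotonicity of the quantum relative entropy under the CPTP map $\Lambda$, using the well-known identity that rewrites the von Neumann entropy in terms of the relative entropy with respect to the maximally mixed state. For any state $\rho$ on a finite-dimensional system $X$ one has
\begin{equation}
D(\rho \| \omega_X) \;=\; -H(X)_\rho \,-\, \Tr\!\big[\rho \log_2 \omega_X\big] \;=\; \log_2 \dim(X) \,-\, H(X)_\rho,
\end{equation}
simply because $\omega_X = \mathds{1}_X/\dim(X)$ commutes with everything and $\log_2 \omega_X = -\log_2 \dim(X)\,\mathds{1}_X$. Applying this identity to both input and output is the first step.

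Next, I would invoke the data processing inequality for the quantum relative entropy: for any CPTP map $\Lambda$ and any pair of states $\sigma,\sigma'$,
\begin{equation}
D\big(\Lambda(\sigma)\,\|\,\Lambda(\sigma')\big) \;\leq\; D(\sigma\|\sigma').
\end{equation}
Setting $\sigma = \rho$ and $\sigma' = \omega_A$, and using the hypothesis $\Lambda(\omega_A)=\omega_B$, this yields
\begin{equation}
D\big(\Lambda(\rho)\,\|\,\omega_B\big) \;\leq\; D(\rho\|\omega_A).
\end{equation}

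Finally, I would substitute the two instances of the identity above, one on each side of this inequality, giving
\begin{equation}
\log_2 \dim(B) - H(B)_{\Lambda(\rho)} \;\leq\; \log_2 \dim(A) - H(A)_\rho,
\end{equation}
and rearranging produces the claimed bound $H(B)_{\Lambda(\rho)} - H(A)_\rho \geq \log_2 \bigl(\dim(B)/\dim(A)\bigr)$. There is no real obstacle here: the only nontrivial input is the monotonicity of relative entropy, which is a standard theorem (Lindblad–Uhlmann) that can simply be cited; the rest is the short algebraic manipulation sketched above. If one wished to avoid relative entropy, an alternative route would be to write $\Lambda$ in a Stinespring dilation and use subadditivity plus the fact that $\Lambda$ preserves the maximally mixed state, but the relative-entropy proof is cleaner and more direct.
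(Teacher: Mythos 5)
Your proposal is correct and follows essentially the same route as the paper's proof: both invoke monotonicity of the quantum relative entropy under the CPTP map $\Lambda$ with reference state $\omega_A$, use the hypothesis $\Lambda(\omega_A)=\omega_B$, and apply the identity $D(\rho\|\omega_X)=\log_2\dim(X)-H(X)_\rho$ on each side before rearranging. No gaps.
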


\begin{proof} The quantum relative entropy is a monotonically decreasing function under the action of quantum operations defined as \cite{watrous2018theory}
\begin{equation}
    D(\rho||\sigma)= \begin{cases}
    \Tr[\rho(\log_2 \rho - \log_2 \sigma)], &\text{if $\rho \subseteq \mathrm{supp}(\sigma)$ }\\
    +\infty, &\text{otherwise}. 
    \end{cases}
\end{equation}

Now, for channels preserving maximally mixed state, $\Lambda(\omega_A)=\omega_B$, monotonicity of relative entropy implies
    \begin{equation} \label{MRE}
    D(\rho||\omega_A) \geq D(\Lambda(\rho)||\omega_B).
    \end{equation}
    
Given any quantum state $\rho \in L(X)$, we have $\rho \subseteq \mathrm{supp}(\omega_X)$, and the following holds
\begin{align}
   D(\rho || \omega_X) &= \Tr \left[ \rho ( \log_2(\rho) - \rho \log_2(\omega_X) ) \right] \nonumber \\
   &= -H(X)_{\rho} + \log_2 ( \dim(X) ). \label{Eq:relative_with_max_mixed}
\end{align}
In the last equality, we have used $\log_2(\omega_X)=-\log_2(\dim(X))\mathds{1}_X$. We now prove the Lemma by using Eq.~\eqref{Eq:relative_with_max_mixed} in Eq.~\eqref{MRE} and rearranging the terms to obtain 
\begin{equation}
    H(B)_{\Lambda(\rho)} - H(A)_{\rho} \geq \log_2 \frac{\textrm{dim}(B)}{\textrm{dim}(A)}.
    \end{equation}
\end{proof}


The result presented above shows that quantum operations preserving maximally mixed states (with larger output dimensions than input ones) imply the monotonicity of quantum entropy. The interventional strategy set in Theorem~\ref{main_thm} is adopted in order to make the time evolution of the environment to implement a local transformation preserving maximally mixed states.
It is convenient to consider the following result before discussing the proof of Theorem~\ref{main_thm}.

\begin{figure}
    \centering
    \includegraphics[width=\columnwidth]{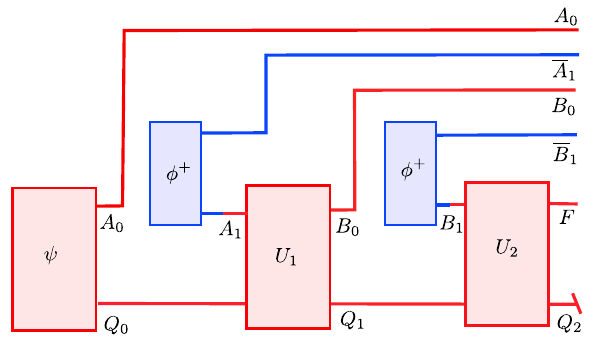}
    \caption{Supermap described in Lemma~\ref{lemma_inf} with the intervention by maximally entangled state $\phi^+_X:= 1/\dim(X)\sum_{i,j}\ketbra{i}{j}_X\otimes \ketbra{i}{j}_{\overline{X}}$ at systems $X\in\{A_1, B_1\}$, where $X\simeq \overline{X}$ and $\{\ket{i}\}_i$ is an orthonormal basis at $X$. Note after the interventions, the overall process becomes isomorphic to a pure state $\tau \in A_0{\otimes}\overline{A}_1{\otimes}B_0{\otimes}\overline{B}_1{\otimes}F{\otimes}Q_2$.}
    \label{fig:Lemma2}
\end{figure}

\begin{lemma} \label{lemma_inf} Let $\Upsilon$ be a non-Markovian process for which holds
\begin{equation}
\Upsilon(\mathrm{A},\mathrm{B})=\Tr_{Q_2}[\mathrm{U}_2 (\mathrm{B} \otimes \mathrm{id})
\mathrm{U}_1 (\mathrm{A} \otimes \mathrm{id}) (\psi)],
\end{equation}
with pure state preparation $\psi$ for $A_0 \otimes Q_0$, and unitary operations $\mathrm{U}_1:L(A_1 \otimes Q_0)\rightarrow L(B_0 \otimes Q_1)$ and $\mathrm{U}_2:L(B_1 \otimes Q_1)\rightarrow L(F \otimes Q_2)$, see Fig.~\ref{fig:Lemma2}. Then, for the intervention by the maximally entangled states at $A_1$ and $B_1$ as defined in Thm.~\ref{main_thm}, the information inequality in Eq.~(\ref{ineq_ent}) holds, i.e.,
\begin{align}
 H(A_0A_1B_0B_1F)_\tau - H(A_0A_1B_0)_\tau \ge \log_2 \frac{\dim(B_1)}{\dim(F)},
\end{align}
where $\tau \in A_0{\otimes}\overline{A}_1{\otimes}B_0{\otimes}\overline{B}_1{\otimes}F{\otimes}Q_2$ is the resulting pure state after the interventions. Note given systems $X$ and $\overline{X}$ are isomporphic: $X\simeq \overline{X}$, we write the entropies $H(X)$ and $H(\overline{X})$ interchangeably.
\end{lemma}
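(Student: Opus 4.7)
My plan is to reduce the inequality to a single application of Lemma~\ref{lemma_dpi} on a channel acting only on the environment register. The key input I would exploit is that $\tau$ is a pure state: the initial $\ket{\psi}_{A_0 Q_0}$ is pure, the interventions feed in the pure maximally entangled states $\ket{\phi^+}_{A_1\overline{A}_1}$ and $\ket{\phi^+}_{B_1\overline{B}_1}$, and the dynamics consist only of the unitaries $U_1$ and $U_2$, so the resulting state on $A_0 \otimes \overline{A}_1 \otimes B_0 \otimes \overline{B}_1 \otimes F \otimes Q_2$ is pure. This will let me convert the two entropies on the left-hand side, which live on observable subsystems, into entropies on the environment wires $Q_1$ and $Q_2$.

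First, I would identify the environment-side channels. Since $\ket{\phi^+}$ has a maximally mixed reduction and is product with everything else, the effective input to $U_1$ on the $A_1$ port is $\omega_{A_1}$ (uncorrelated from $Q_0$), and similarly for $U_2$ on the $B_1$ port. Defining $\mathcal{E}_1(\rho) = \Tr_{B_0}[U_1(\omega_{A_1}\otimes \rho) U_1^\dagger]$ and $\mathcal{E}_2(\rho) = \Tr_{F}[U_2(\omega_{B_1}\otimes \rho) U_2^\dagger]$, both preserve maximally mixed states: a unitary sends $\omega_{A_1}\otimes \omega_{Q_0}$ to the maximally mixed state on $B_0\otimes Q_1$, and tracing out $B_0$ gives $\omega_{Q_1}$; the same argument gives $\mathcal{E}_2(\omega_{Q_1})=\omega_{Q_2}$. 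Writing $\sigma := \Tr_{A_0}[\proj{\psi}]$, a direct computation shows that the reduced states of $\tau$ on $Q_1$ and on $Q_2$ are $\mathcal{E}_1(\sigma)$ and $(\mathcal{E}_2\circ \mathcal{E}_1)(\sigma)$, respectively.

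Next, I would use purity to rewrite the two entropies. Global purity of $\tau$ immediately gives $H(A_0 A_1 B_0 B_1 F)_\tau = H(Q_2)_\tau$. For the second entropy, note that $U_2$ does not touch $A_0\,\overline{A}_1\,B_0$, so the reduced state of $\tau$ on $A_0\,\overline{A}_1\,B_0$ agrees with that of the pre-$U_2$ state; this pre-$U_2$ state is pure on $A_0\,\overline{A}_1\,B_0\,Q_1$ (the factor $\ket{\phi^+}_{B_1\overline{B}_1}$ was untouched by $U_1$ and drops out). Hence $H(A_0 A_1 B_0)_\tau = H(Q_1)_{\mathcal{E}_1(\sigma)}$, and the left-hand side of the claim is exactly $H(Q_2)_{(\mathcal{E}_2\circ\mathcal{E}_1)(\sigma)} - H(Q_1)_{\mathcal{E}_1(\sigma)}$.

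Finally, I would apply Lemma~\ref{lemma_dpi} to the channel $\mathcal{E}_2$ with input $\mathcal{E}_1(\sigma)$ to obtain $H(Q_2)_{\mathcal{E}_2(\mathcal{E}_1(\sigma))} - H(Q_1)_{\mathcal{E}_1(\sigma)} \ge \log_2 (\dim(Q_2)/\dim(Q_1))$. Since $U_2:B_1\otimes Q_1 \to F\otimes Q_2$ is unitary, $\dim(B_1)\dim(Q_1)=\dim(F)\dim(Q_2)$, and the right-hand side collapses to $\log_2(\dim(B_1)/\dim(F))$, which is exactly the bound stated in Lemma~\ref{lemma_inf}. The main obstacle, as I see it, is purely bookkeeping: verifying at each stage that the two $\ket{\phi^+}$ factors really decouple in the right way so that $\mathcal{E}_1$ and $\mathcal{E}_2$ truly act as standalone environmental channels, and that the appropriate intermediate state is pure. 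All of this is forced by the product structure of the interventions together with unitarity of $U_1$ and $U_2$.
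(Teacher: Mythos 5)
Your proposal is correct and follows essentially the same route as the paper's proof: the paper likewise uses global purity to identify $H(A_0A_1B_0B_1F)_\tau$ with $H(Q_2)$ and $H(A_0A_1B_0)_\tau$ with $H(Q_1)$ (via the purification of the pre-$U_2$ state and the fact that $U_2$ does not act on $A_0\overline{A}_1B_0$), and then applies Lemma~\ref{lemma_dpi} to the unital-on-maximally-mixed channel $\Lambda(\cdot)=\Tr_F\circ U_2[(\cdot)\otimes\omega_{B_1}]$, which is exactly your $\mathcal{E}_2$, together with $\dim(B_1)\dim(Q_1)=\dim(F)\dim(Q_2)$. Your explicit introduction of $\mathcal{E}_1$ is just a named version of the paper's state $\eta_{Q_1}$ and changes nothing of substance.
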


\begin{proof}
Let us denote the state fed at the input system $Q_1$ of $U_2$ by $\eta_{Q_1}$ and denote its purification by the state $\widetilde{\tau}_{A_0\overline{A}_1B_0Q_1}$, i.e., $\eta_{Q_1}:=\Tr_{A_0\overline{A}_1B_0}\widetilde{\tau}$. Then we have the following chain of equalities:
\begin{align}
    H(Q_1)_{\eta}=H(A_0A_1B_0)_{\widetilde{\tau}}=H(A_0A_1B_0)_\tau.\label{Eq:lemma2_1}
\end{align}
Here, the first equality is because  $H(X)_{\alpha}=H(Y)_{\alpha}$ holds for every pure bipartite quantum state $\alpha\in L(X\otimes Y)$. The second equality is due to the fact that the systems $A_0 \otimes \overline{A}_1\otimes B_0$ are going through the identity map. 

Now let us focus on the channel $\Lambda_{Q_1\to Q_2}$ defined as
\begin{align}
 \Lambda(\cdot)&:=\Tr_{F}\circ U_2[ (\cdot)\otimes \Tr_{\overline{B}_1} \phi^+_{B_1\overline{B}_1}] \nonumber \\
& =\Tr_{F}\circ U_2[ (\cdot)\otimes \omega_{B_1}]\label{Eq:compleltely_factorizable_channel_Lemma2}.
\end{align}
One can check, this channel maps a maximally mixed state $\omega_{Q_1}$ to a maximally mixed state $\omega_{Q_2}$. Channels defined as in Eq.\eqref{Eq:compleltely_factorizable_channel_Lemma2} are called \emph{completely factorizable channels}, \cite[Definition 16.1]{gour2024resourcesquantumworld}. Hence, we can apply Lemma~\ref{lemma_dpi} on this channel. Now we have the following chain of equations:
\begin{align}
    &H(A_0A_1B_0B_1F)_\tau = H(Q_2)_{\tau} \nonumber \\
    &= H(Q_2)_{\Lambda(\eta)} \ge  H(Q_1)_{\eta} + \log_2\frac{\dim(Q_2)}{\dim(Q_1)} \nonumber \\
    &= H(A_0A_1B_0)_{\tau} + \log_2\frac{\dim(B_1)}{\dim(F)}.
\end{align}
The first equality is due to $\tau$ being a pure state. The inequality is due to Lemma~\ref{lemma_dpi}. In the last equality we have used Eq.~\eqref{Eq:lemma2_1} and the fact that $U_2$ being unitary, we have $\dim(B_1) \dim (Q_1) = \dim(F) \dim(Q_2)$. Rearranging the above inequality, we obtain the desired Eq.~\eqref{ineq_ent} and prove Lemma~\ref{lemma_inf}.



\end{proof}

It follows a final result which is relevant to our discussion.
In the following, we show that every bipartite quantum non-Markovian process can be cast as a sequence of pure state preparation $\psi \in A_0 \otimes Q_0$, unitary transformations $\mathrm{U}_1: L(A_1 \otimes Q_0 ) \rightarrow L(B_0 \otimes Q_1)$ and $\mathrm{U}_2: L(B_1 \otimes Q_1) \rightarrow L(F \otimes Q_2)$, followed by a partial trace of the final environment system. That is done by taking the purification of states and the dilation of quantum operations. 

\begin{figure}[h!] 
\centering     \includegraphics[width=.45\textwidth]{./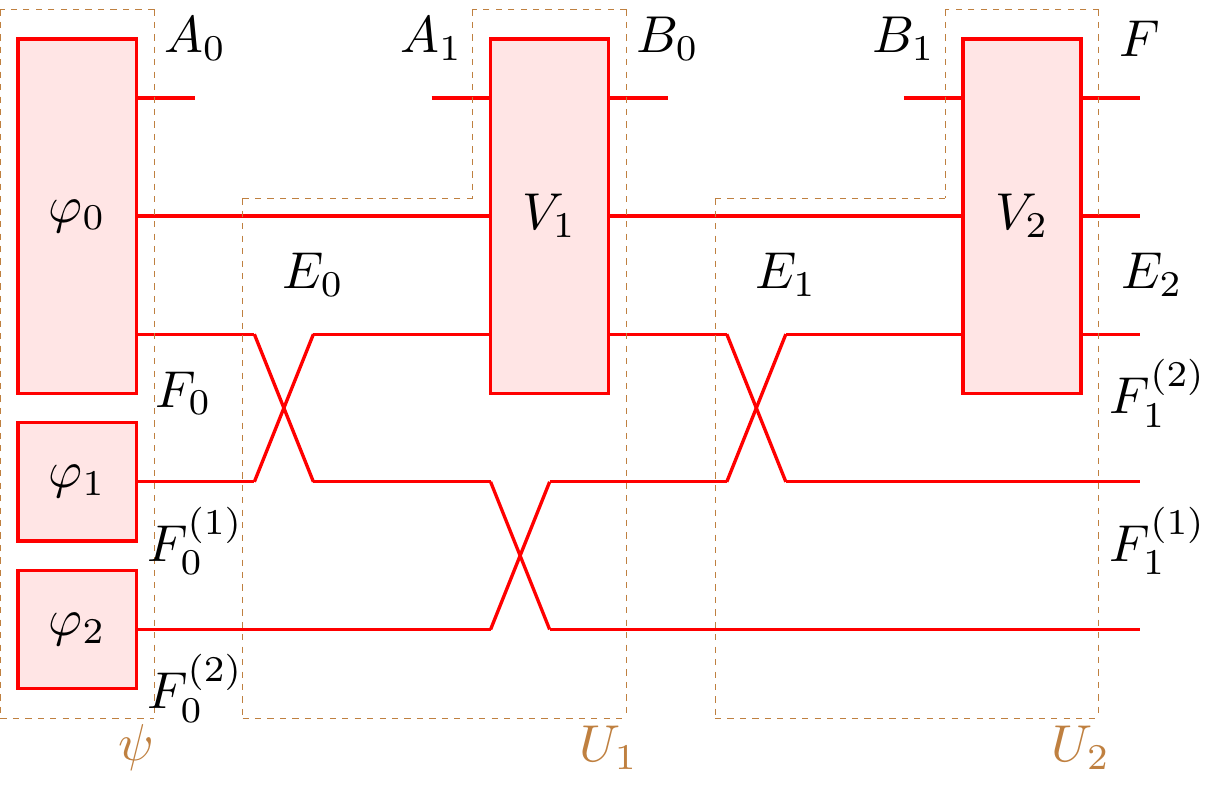} 
    \caption{Purification of a non-Markovian process. Every non-Markovian process is a sequential action of bipartite system-environment quantum operations on an initial state preparation (see top picture $(a)$ in Fig.~\ref{figure2}). Furthermore, we can take a purification of the initial quantum state of the process and Stinespring representations of the quantum channels in order to define an equivalent quantum process on the local laboratories of Alice and Bob. Swapping the ancillary environmental states, which could be absorbed into the unitary operations, results in a sequential use of global unitary operations acting on an initial pure state.}
    \label{fig_appendix_A}
\end{figure}

\begin{lemma} \label{lemma_pure_nm} Let $\Upsilon$ be a non-Markovian quantum process with $A \preceq B$. Then, it holds that
\begin{eqnarray} \label{pure_nm}    \Upsilon(\mathrm{A},\mathrm{B})=\Tr_{Q_2}[\mathrm{U}_2 (\mathrm{B} \otimes \mathrm{id})
\mathrm{U}_1 (\mathrm{A} \otimes \mathrm{id}) (\psi)]
\end{eqnarray}
where $\psi$ is a pure quantum state of a bipartite quantum system $A_0 \otimes Q_0$,  $\mathrm{U}_1:L(A_1 \otimes Q_0)\rightarrow L(B_0 \otimes Q_1)$ and $\mathrm{U}_2:L(B_1 \otimes Q_1)\rightarrow L(F \otimes Q_2)$ are unitary quantum operations.
\end{lemma}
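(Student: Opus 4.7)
The plan is to combine two textbook constructions: purify the initial mixed state on $A_0 \otimes E_0$ with an ancillary reference system, and Stinespring-dilate each channel $\Lambda_j$ into an isometry which can then be padded to a unitary by an input ancilla in a fixed pure state. The only real work is bundling all of these auxiliary systems into the global registers $Q_0, Q_1, Q_2$ so that the resulting $\mathrm{U}_1, \mathrm{U}_2$ are genuinely unitary and the final partial trace over $Q_2$ reproduces Eq.~(\ref{non-Markov}).

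Concretely, I would begin by writing $\rho = \Tr_R \proj{\psi_0}$ for some purification $\ket{\psi_0} \in A_0 \otimes E_0 \otimes R$ of the initial state. Next, by Stinespring's theorem, for each $j \in \{1,2\}$ there is an isometry $V_j$ with a fresh Stinespring environment $F_j$ such that $\Lambda_j(\cdot) = \Tr_{F_j}[V_j(\cdot)V_j^\dagger]$. Each isometry extends to a unitary $\widetilde{\mathrm{U}}_j$ by appending an input ancilla $K_j$ of appropriate dimension, initialized in $\ket{0}_{K_j}$; this is the standard isometry-to-unitary padding.

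Now I would set $Q_0 := E_0 \otimes R \otimes K_1 \otimes K_2$, with $K_1,K_2$ initialized in $\ket{0}$, so that $\psi := \proj{\psi_0} \otimes \proj{0}_{K_1} \otimes \proj{0}_{K_2}$ is a pure state in $A_0 \otimes Q_0$. Define $\mathrm{U}_1 := \widetilde{\mathrm{U}}_1 \otimes \id_R \otimes \id_{K_2}$, which is unitary on $A_1 \otimes Q_0$ with output in $B_0 \otimes Q_1$ where $Q_1 := E_1 \otimes F_1 \otimes R \otimes K_2$, and similarly $\mathrm{U}_2 := \widetilde{\mathrm{U}}_2 \otimes \id_{F_1} \otimes \id_R$, unitary from $B_1 \otimes Q_1$ to $F \otimes Q_2$ with $Q_2 := E_2 \otimes F_2 \otimes F_1 \otimes R$. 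Tracing out $Q_2$ at the end amounts to tracing successively over $F_2$ (converting $\widetilde{\mathrm{U}}_2$ back into $\Lambda_2$), $F_1$ (converting $\widetilde{\mathrm{U}}_1$ back into $\Lambda_1$), $E_2$ (the original outer environment trace of Eq.~(\ref{non-Markov})), and $R$ (removing the purification), yielding exactly $\Upsilon(\mathrm{A},\mathrm{B})$ as defined in Definition~\ref{def_sep}.

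The main obstacle is purely bookkeeping: the ancilla $K_2$ required to turn $V_2$ into a unitary must already be available when $\mathrm{U}_2$ fires, so it has to be included in $Q_0$ from the start and carried untouched through $\mathrm{U}_1$. Handling this by adding trivial identity factors to the two dilations, as above, makes the construction go through cleanly, and the same trick accommodates the reference system $R$ that purifies $\rho$. All other ingredients---purification, Stinespring dilation, and the isometry-to-unitary padding---are textbook and introduce no further obstructions.
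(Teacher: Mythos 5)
Your proposal is correct and follows essentially the same route as the paper's proof: purify the initial state, take unitary (Stinespring) dilations of $\Lambda_1$ and $\Lambda_2$ with pure input ancillas, bundle all auxiliary systems into $Q_0$, carry the as-yet-unused ancillas untouched through $\mathrm{U}_1$, and defer all environment traces to the final $\Tr_{Q_2}$. The only cosmetic differences are that you split each dilation into an isometry plus an isometry-to-unitary padding ancilla, and you route the idle registers with explicit identity tensor factors where the paper uses SWAP operations.
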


The condition in Eq.~(\ref{pure_nm}) shows that in the definition of non-Markovianity in Eq.~(\ref{non-Markov}) suffices to consider pure state preparations (instead of mixed states) and unitary operations (instead of general quantum operations).

\begin{proof} By definition, the process $\Upsilon$ take the form in Eq.~(\ref{non-Markov}). Furthermore, there is a purification $\varphi_0$ of ${A_0 \otimes E_0 \otimes F_0}$ for $\rho$. Note that $F_0$ works as a purification system. Then, consider a dilation of the quantum operation $\Lambda_1$ with respect to the unitary operation $\mathrm{V}_1:L(A_1 \otimes E_0 \otimes F_{0;1}) \rightarrow L(B_0 \otimes E_1 \otimes F_{1;1})$ and pure state $\varphi_1$ of ${F_{0;1}}$. Similarly, take a dilation for $\Lambda_2$ with $\mathrm{V}_2:L(B_1 \otimes E_1 \otimes F_{0;2}) \rightarrow L(B_0 \otimes E_1 \otimes F_{1;2})$ and pure state $\varphi_1$ of ${F_{0;2}}$. 

Now, consider the definitions
\begin{eqnarray}
    \psi&=&\varphi_0 \otimes \varphi_1 \otimes \varphi_2,\\
    \mathrm{U}_1&=&\mathrm{SWAP}_{F_0:F_{0;2}}(\mathrm{V}_1 \otimes \mathds{1}_{F_{0;1}F_{0;2}})\mathrm{SWAP}_{F_0:F_{0;1}},\\
    \mathrm{U}_2&=&(\mathrm{V}_2 \otimes \mathds{1}_{F_{1;1}F_0})\mathrm{SWAP}_{F_{1;1}:F_{2;0}},
\end{eqnarray}
where the swap operation between subsystems $X$ and $Y$ is denoted with $\mathrm{SWAP}_{X:Y}$.

Figure~\ref{fig_appendix_A} is a diagrammatic proof that Eq.~(\ref{non-Markov}) holds with $Q_0=E_0\otimes F_0 \otimes F_{0;1} \otimes F_{0;2}$, $Q_1=E_1\otimes F_{1;1} \otimes F_{0;2} \otimes F_0$, and $Q_2=E_2\otimes F_{1;2} \otimes F_{1;1} \otimes F_0$.
    
\end{proof}

\noindent \textit{Proof of Theorem \ref{main_thm}:} It follows directly from Lemma~\ref{lemma_inf} along with Lemma~\ref{lemma_pure_nm}.


\begin{thebibliography}{57}%
\makeatletter
\providecommand \@ifxundefined [1]{%
 \@ifx{#1\undefined}
}%
\providecommand \@ifnum [1]{%
 \ifnum #1\expandafter \@firstoftwo
 \else \expandafter \@secondoftwo
 \fi
}%
\providecommand \@ifx [1]{%
 \ifx #1\expandafter \@firstoftwo
 \else \expandafter \@secondoftwo
 \fi
}%
\providecommand \natexlab [1]{#1}%
\providecommand \enquote  [1]{``#1''}%
\providecommand \bibnamefont  [1]{#1}%
\providecommand \bibfnamefont [1]{#1}%
\providecommand \citenamefont [1]{#1}%
\providecommand \href@noop [0]{\@secondoftwo}%
\providecommand \href [0]{\begingroup \@sanitize@url \@href}%
\providecommand \@href[1]{\@@startlink{#1}\@@href}%
\providecommand \@@href[1]{\endgroup#1\@@endlink}%
\providecommand \@sanitize@url [0]{\catcode `\\12\catcode `\$12\catcode
  `\&12\catcode `\#12\catcode `\^12\catcode `\_12\catcode `\%12\relax}%
\providecommand \@@startlink[1]{}%
\providecommand \@@endlink[0]{}%
\providecommand \url  [0]{\begingroup\@sanitize@url \@url }%
\providecommand \@url [1]{\endgroup\@href {#1}{\urlprefix }}%
\providecommand \urlprefix  [0]{URL }%
\providecommand \Eprint [0]{\href }%
\providecommand \doibase [0]{https://doi.org/}%
\providecommand \selectlanguage [0]{\@gobble}%
\providecommand \bibinfo  [0]{\@secondoftwo}%
\providecommand \bibfield  [0]{\@secondoftwo}%
\providecommand \translation [1]{[#1]}%
\providecommand \BibitemOpen [0]{}%
\providecommand \bibitemStop [0]{}%
\providecommand \bibitemNoStop [0]{.\EOS\space}%
\providecommand \EOS [0]{\spacefactor3000\relax}%
\providecommand \BibitemShut  [1]{\csname bibitem#1\endcsname}%
\let\auto@bib@innerbib\@empty
\bibitem [{\citenamefont {Pearl}(2009)}]{Pearl2009}%
  \BibitemOpen
  \bibfield  {author} {\bibinfo {author} {\bibfnamefont {J.}~\bibnamefont
  {Pearl}},\ }\href@noop {} {{\selectlanguage {en}\emph {\bibinfo {title}
  {Causality}}}}\ (\bibinfo  {publisher} {Cambridge University Press},\
  \bibinfo {address} {Cambridge, England},\ \bibinfo {year} {2009})\BibitemShut
  {NoStop}%
\bibitem [{\citenamefont {Spirtes}\ \emph {et~al.}(2001)\citenamefont
  {Spirtes}, \citenamefont {Glymour},\ and\ \citenamefont
  {Scheines}}]{Spirtes2001}%
  \BibitemOpen
  \bibfield  {author} {\bibinfo {author} {\bibfnamefont {P.}~\bibnamefont
  {Spirtes}}, \bibinfo {author} {\bibfnamefont {C.}~\bibnamefont {Glymour}},\
  and\ \bibinfo {author} {\bibfnamefont {R.}~\bibnamefont {Scheines}},\ }\href
  {https://doi.org/10.7551/mitpress/1754.001.0001} {\emph {\bibinfo {title}
  {Causation, Prediction, and Search}}}\ (\bibinfo  {publisher} {The MIT
  Press},\ \bibinfo {year} {2001})\BibitemShut {NoStop}%
\bibitem [{\citenamefont {Lamport}(1978)}]{Lamport1978}%
  \BibitemOpen
  \bibfield  {author} {\bibinfo {author} {\bibfnamefont {L.}~\bibnamefont
  {Lamport}},\ }\bibfield  {title} {\bibinfo {title} {Time, clocks, and the
  ordering of events in a distributed system},\ }\href
  {https://doi.org/10.1145/359545.359563} {\bibfield  {journal} {\bibinfo
  {journal} {Communications of the ACM}\ }\textbf {\bibinfo {volume} {21}},\
  \bibinfo {pages} {558–565} (\bibinfo {year} {1978})}\BibitemShut {NoStop}%
\bibitem [{\citenamefont {Oreshkov}\ \emph {et~al.}(2012)\citenamefont
  {Oreshkov}, \citenamefont {Costa},\ and\ \citenamefont
  {Brukner}}]{oreshkov2012quantum}%
  \BibitemOpen
  \bibfield  {author} {\bibinfo {author} {\bibfnamefont {O.}~\bibnamefont
  {Oreshkov}}, \bibinfo {author} {\bibfnamefont {F.}~\bibnamefont {Costa}},\
  and\ \bibinfo {author} {\bibfnamefont {{\v{C}}.}~\bibnamefont {Brukner}},\
  }\bibfield  {title} {\bibinfo {title} {Quantum correlations with no causal
  order},\ }\href {https://doi.org/https://doi.org/10.1038/ncomms2076}
  {\bibfield  {journal} {\bibinfo  {journal} {Nature communications}\ }\textbf
  {\bibinfo {volume} {3}},\ \bibinfo {pages} {1092} (\bibinfo {year}
  {2012})}\BibitemShut {NoStop}%
\bibitem [{\citenamefont {Chiribella}\ \emph {et~al.}(2013)\citenamefont
  {Chiribella}, \citenamefont {D’Ariano}, \citenamefont {Perinotti},\ and\
  \citenamefont {Valiron}}]{chiribella2013quantum}%
  \BibitemOpen
  \bibfield  {author} {\bibinfo {author} {\bibfnamefont {G.}~\bibnamefont
  {Chiribella}}, \bibinfo {author} {\bibfnamefont {G.~M.}\ \bibnamefont
  {D’Ariano}}, \bibinfo {author} {\bibfnamefont {P.}~\bibnamefont
  {Perinotti}},\ and\ \bibinfo {author} {\bibfnamefont {B.}~\bibnamefont
  {Valiron}},\ }\bibfield  {title} {\bibinfo {title} {Quantum computations
  without definite causal structure},\ }\href
  {https://doi.org/https://doi.org/10.1103/PhysRevA.88.022318} {\bibfield
  {journal} {\bibinfo  {journal} {Physical Review A}\ }\textbf {\bibinfo
  {volume} {88}},\ \bibinfo {pages} {022318} (\bibinfo {year}
  {2013})}\BibitemShut {NoStop}%
\bibitem [{\citenamefont {Hardy}(2007)}]{Hardy_2007}%
  \BibitemOpen
  \bibfield  {author} {\bibinfo {author} {\bibfnamefont {L.}~\bibnamefont
  {Hardy}},\ }\bibfield  {title} {\bibinfo {title} {Towards quantum gravity: a
  framework for probabilistic theories with non-fixed causal structure},\
  }\href {https://doi.org/10.1088/1751-8113/40/12/s12} {\bibfield  {journal}
  {\bibinfo  {journal} {Journal of Physics A: Mathematical and Theoretical}\
  }\textbf {\bibinfo {volume} {40}},\ \bibinfo {pages} {3081–3099} (\bibinfo
  {year} {2007})}\BibitemShut {NoStop}%
\bibitem [{\citenamefont {Hardy}(2009)}]{Hardy_2009}%
  \BibitemOpen
  \bibfield  {author} {\bibinfo {author} {\bibfnamefont {L.}~\bibnamefont
  {Hardy}},\ }\bibinfo {title} {Quantum gravity computers: On the theory of
  computation with indefinite causal structure},\ in\ \href
  {https://doi.org/10.1007/978-1-4020-9107-0_21} {\emph {\bibinfo {booktitle}
  {Quantum Reality, Relativistic Causality, and Closing the Epistemic
  Circle}}}\ (\bibinfo  {publisher} {Springer Netherlands},\ \bibinfo {year}
  {2009})\ p.\ \bibinfo {pages} {379–401}\BibitemShut {NoStop}%
\bibitem [{\citenamefont {Zych}\ \emph {et~al.}(2019)\citenamefont {Zych},
  \citenamefont {Costa}, \citenamefont {Pikovski},\ and\ \citenamefont
  {Brukner}}]{Zych_2019}%
  \BibitemOpen
  \bibfield  {author} {\bibinfo {author} {\bibfnamefont {M.}~\bibnamefont
  {Zych}}, \bibinfo {author} {\bibfnamefont {F.}~\bibnamefont {Costa}},
  \bibinfo {author} {\bibfnamefont {I.}~\bibnamefont {Pikovski}},\ and\
  \bibinfo {author} {\bibfnamefont {{\v{C}}.}~\bibnamefont {Brukner}},\
  }\bibfield  {title} {\bibinfo {title} {Bell’s theorem for temporal order},\
  }\bibfield  {journal} {\bibinfo  {journal} {Nature Communications}\ }\textbf
  {\bibinfo {volume} {10}},\ \href {https://doi.org/10.1038/s41467-019-11579-x}
  {10.1038/s41467-019-11579-x} (\bibinfo {year} {2019})\BibitemShut {NoStop}%
\bibitem [{\citenamefont {Araújo}\ \emph {et~al.}(2014)\citenamefont
  {Araújo}, \citenamefont {Costa},\ and\ \citenamefont
  {Brukner}}]{Ara_jo_2014}%
  \BibitemOpen
  \bibfield  {author} {\bibinfo {author} {\bibfnamefont {M.}~\bibnamefont
  {Araújo}}, \bibinfo {author} {\bibfnamefont {F.}~\bibnamefont {Costa}},\
  and\ \bibinfo {author} {\bibfnamefont {{\v{C}}.}~\bibnamefont {Brukner}},\
  }\bibfield  {title} {\bibinfo {title} {Computational advantage from
  quantum-controlled ordering of gates},\ }\bibfield  {journal} {\bibinfo
  {journal} {Physical Review Letters}\ }\textbf {\bibinfo {volume} {113}},\
  \href {https://doi.org/10.1103/physrevlett.113.250402}
  {10.1103/physrevlett.113.250402} (\bibinfo {year} {2014})\BibitemShut
  {NoStop}%
\bibitem [{\citenamefont {Guérin}\ \emph {et~al.}(2016)\citenamefont
  {Guérin}, \citenamefont {Feix}, \citenamefont {Araújo},\ and\ \citenamefont
  {Brukner}}]{Gu_rin_2016}%
  \BibitemOpen
  \bibfield  {author} {\bibinfo {author} {\bibfnamefont {P.~A.}\ \bibnamefont
  {Guérin}}, \bibinfo {author} {\bibfnamefont {A.}~\bibnamefont {Feix}},
  \bibinfo {author} {\bibfnamefont {M.}~\bibnamefont {Araújo}},\ and\ \bibinfo
  {author} {\bibfnamefont {{\v{C}}.}~\bibnamefont {Brukner}},\ }\bibfield
  {title} {\bibinfo {title} {Exponential communication complexity advantage
  from quantum superposition of the direction of communication},\ }\bibfield
  {journal} {\bibinfo  {journal} {Physical Review Letters}\ }\textbf {\bibinfo
  {volume} {117}},\ \href {https://doi.org/10.1103/physrevlett.117.100502}
  {10.1103/physrevlett.117.100502} (\bibinfo {year} {2016})\BibitemShut
  {NoStop}%
\bibitem [{\citenamefont {Araújo}\ \emph {et~al.}(2017)\citenamefont
  {Araújo}, \citenamefont {Guérin},\ and\ \citenamefont
  {Baumeler}}]{Ara_jo_2017}%
  \BibitemOpen
  \bibfield  {author} {\bibinfo {author} {\bibfnamefont {M.}~\bibnamefont
  {Araújo}}, \bibinfo {author} {\bibfnamefont {P.~A.}\ \bibnamefont
  {Guérin}},\ and\ \bibinfo {author} {\bibfnamefont {{\"{A}}.}~\bibnamefont
  {Baumeler}},\ }\bibfield  {title} {\bibinfo {title} {Quantum computation with
  indefinite causal structures},\ }\bibfield  {journal} {\bibinfo  {journal}
  {Physical Review A}\ }\textbf {\bibinfo {volume} {96}},\ \href
  {https://doi.org/10.1103/physreva.96.052315} {10.1103/physreva.96.052315}
  (\bibinfo {year} {2017})\BibitemShut {NoStop}%
\bibitem [{\citenamefont {Ebler}\ \emph {et~al.}(2018)\citenamefont {Ebler},
  \citenamefont {Salek},\ and\ \citenamefont {Chiribella}}]{ebler2018enhanced}%
  \BibitemOpen
  \bibfield  {author} {\bibinfo {author} {\bibfnamefont {D.}~\bibnamefont
  {Ebler}}, \bibinfo {author} {\bibfnamefont {S.}~\bibnamefont {Salek}},\ and\
  \bibinfo {author} {\bibfnamefont {G.}~\bibnamefont {Chiribella}},\ }\bibfield
   {title} {\bibinfo {title} {Enhanced communication with the assistance of
  indefinite causal order},\ }\href
  {https://doi.org/https://doi.org/10.1103/PhysRevLett.120.120502} {\bibfield
  {journal} {\bibinfo  {journal} {Physical Review Letters}\ }\textbf {\bibinfo
  {volume} {120}},\ \bibinfo {pages} {120502} (\bibinfo {year}
  {2018})}\BibitemShut {NoStop}%
\bibitem [{\citenamefont {Chiribella}\ \emph {et~al.}(2021)\citenamefont
  {Chiribella}, \citenamefont {Banik}, \citenamefont {Bhattacharya},
  \citenamefont {Guha}, \citenamefont {Alimuddin}, \citenamefont {Roy},
  \citenamefont {Saha}, \citenamefont {Agrawal},\ and\ \citenamefont
  {Kar}}]{Chiribella_2021}%
  \BibitemOpen
  \bibfield  {author} {\bibinfo {author} {\bibfnamefont {G.}~\bibnamefont
  {Chiribella}}, \bibinfo {author} {\bibfnamefont {M.}~\bibnamefont {Banik}},
  \bibinfo {author} {\bibfnamefont {S.~S.}\ \bibnamefont {Bhattacharya}},
  \bibinfo {author} {\bibfnamefont {T.}~\bibnamefont {Guha}}, \bibinfo {author}
  {\bibfnamefont {M.}~\bibnamefont {Alimuddin}}, \bibinfo {author}
  {\bibfnamefont {A.}~\bibnamefont {Roy}}, \bibinfo {author} {\bibfnamefont
  {S.}~\bibnamefont {Saha}}, \bibinfo {author} {\bibfnamefont {S.}~\bibnamefont
  {Agrawal}},\ and\ \bibinfo {author} {\bibfnamefont {G.}~\bibnamefont {Kar}},\
  }\bibfield  {title} {\bibinfo {title} {Indefinite causal order enables
  perfect quantum communication with zero capacity channels},\ }\href
  {https://doi.org/10.1088/1367-2630/abe7a0} {\bibfield  {journal} {\bibinfo
  {journal} {New Journal of Physics}\ }\textbf {\bibinfo {volume} {23}},\
  \bibinfo {pages} {033039} (\bibinfo {year} {2021})}\BibitemShut {NoStop}%
\bibitem [{\citenamefont {Chiribella}\ \emph {et~al.}(2008)\citenamefont
  {Chiribella}, \citenamefont {D'Ariano},\ and\ \citenamefont
  {Perinotti}}]{chiribella2008transforming}%
  \BibitemOpen
  \bibfield  {author} {\bibinfo {author} {\bibfnamefont {G.}~\bibnamefont
  {Chiribella}}, \bibinfo {author} {\bibfnamefont {G.~M.}\ \bibnamefont
  {D'Ariano}},\ and\ \bibinfo {author} {\bibfnamefont {P.}~\bibnamefont
  {Perinotti}},\ }\bibfield  {title} {\bibinfo {title} {Transforming quantum
  operations: Quantum supermaps},\ }\href
  {https://doi.org/https://doi.org/10.1209/0295-5075/83/30004} {\bibfield
  {journal} {\bibinfo  {journal} {Europhysics Letters}\ }\textbf {\bibinfo
  {volume} {83}},\ \bibinfo {pages} {30004} (\bibinfo {year}
  {2008})}\BibitemShut {NoStop}%
\bibitem [{\citenamefont {Chiribella}\ \emph {et~al.}(2009)\citenamefont
  {Chiribella}, \citenamefont {D’Ariano},\ and\ \citenamefont
  {Perinotti}}]{chiribella2009theoretical}%
  \BibitemOpen
  \bibfield  {author} {\bibinfo {author} {\bibfnamefont {G.}~\bibnamefont
  {Chiribella}}, \bibinfo {author} {\bibfnamefont {G.~M.}\ \bibnamefont
  {D’Ariano}},\ and\ \bibinfo {author} {\bibfnamefont {P.}~\bibnamefont
  {Perinotti}},\ }\bibfield  {title} {\bibinfo {title} {Theoretical framework
  for quantum networks},\ }\href
  {https://doi.org/https://doi.org/10.1103/PhysRevA.80.022339} {\bibfield
  {journal} {\bibinfo  {journal} {Physical Review A—Atomic, Molecular, and
  Optical Physics}\ }\textbf {\bibinfo {volume} {80}},\ \bibinfo {pages}
  {022339} (\bibinfo {year} {2009})}\BibitemShut {NoStop}%
\bibitem [{\citenamefont {Pollock}\ \emph {et~al.}(2018)\citenamefont
  {Pollock}, \citenamefont {Rodr{\'\i}guez-Rosario}, \citenamefont
  {Frauenheim}, \citenamefont {Paternostro},\ and\ \citenamefont
  {Modi}}]{pollock2018non}%
  \BibitemOpen
  \bibfield  {author} {\bibinfo {author} {\bibfnamefont {F.~A.}\ \bibnamefont
  {Pollock}}, \bibinfo {author} {\bibfnamefont {C.}~\bibnamefont
  {Rodr{\'\i}guez-Rosario}}, \bibinfo {author} {\bibfnamefont {T.}~\bibnamefont
  {Frauenheim}}, \bibinfo {author} {\bibfnamefont {M.}~\bibnamefont
  {Paternostro}},\ and\ \bibinfo {author} {\bibfnamefont {K.}~\bibnamefont
  {Modi}},\ }\bibfield  {title} {\bibinfo {title} {Non-markovian quantum
  processes: Complete framework and efficient characterization},\ }\href
  {https://doi.org/https://doi.org/10.1103/PhysRevA.97.012127} {\bibfield
  {journal} {\bibinfo  {journal} {Physical Review A}\ }\textbf {\bibinfo
  {volume} {97}},\ \bibinfo {pages} {012127} (\bibinfo {year}
  {2018})}\BibitemShut {NoStop}%
\bibitem [{\citenamefont {Wilde}(2011)}]{wilde2011classical}%
  \BibitemOpen
  \bibfield  {author} {\bibinfo {author} {\bibfnamefont {M.~M.}\ \bibnamefont
  {Wilde}},\ }\bibfield  {title} {\bibinfo {title} {From classical to quantum
  shannon theory},\ }\href@noop {} {\bibfield  {journal} {\bibinfo  {journal}
  {arXiv preprint arXiv:1106.1445}\ } (\bibinfo {year} {2011})}\BibitemShut
  {NoStop}%
\bibitem [{\citenamefont {Renner}\ \emph {et~al.}(2005)\citenamefont {Renner},
  \citenamefont {Gisin},\ and\ \citenamefont {Kraus}}]{Renner2005}%
  \BibitemOpen
  \bibfield  {author} {\bibinfo {author} {\bibfnamefont {R.}~\bibnamefont
  {Renner}}, \bibinfo {author} {\bibfnamefont {N.}~\bibnamefont {Gisin}},\ and\
  \bibinfo {author} {\bibfnamefont {B.}~\bibnamefont {Kraus}},\ }\bibfield
  {title} {\bibinfo {title} {Information-theoretic security proof for
  quantum-key-distribution protocols},\ }\bibfield  {journal} {\bibinfo
  {journal} {Physical Review A}\ }\textbf {\bibinfo {volume} {72}},\ \href
  {https://doi.org/10.1103/physreva.72.012332} {10.1103/physreva.72.012332}
  (\bibinfo {year} {2005})\BibitemShut {NoStop}%
\bibitem [{\citenamefont {Gamal}\ and\ \citenamefont
  {Kim}(2011)}]{gamal2011lecturenotes}%
  \BibitemOpen
  \bibfield  {author} {\bibinfo {author} {\bibfnamefont {A.~E.}\ \bibnamefont
  {Gamal}}\ and\ \bibinfo {author} {\bibfnamefont {Y.-H.}\ \bibnamefont
  {Kim}},\ }\href {https://arxiv.org/abs/1001.3404} {\bibinfo {title} {Lecture
  notes on network information theory}} (\bibinfo {year} {2011}),\ \Eprint
  {https://arxiv.org/abs/1001.3404} {arXiv:1001.3404 [cs.IT]} \BibitemShut
  {NoStop}%
\bibitem [{\citenamefont {Bekenstein}(2004)}]{Bekenstein2004}%
  \BibitemOpen
  \bibfield  {author} {\bibinfo {author} {\bibfnamefont {J.~D.}\ \bibnamefont
  {Bekenstein}},\ }\bibfield  {title} {\bibinfo {title} {Black holes and
  information theory},\ }\href {https://doi.org/10.1080/00107510310001632523}
  {\bibfield  {journal} {\bibinfo  {journal} {Contemporary Physics}\ }\textbf
  {\bibinfo {volume} {45}},\ \bibinfo {pages} {31–43} (\bibinfo {year}
  {2004})}\BibitemShut {NoStop}%
\bibitem [{\citenamefont {Goold}\ \emph {et~al.}(2016)\citenamefont {Goold},
  \citenamefont {Huber}, \citenamefont {Riera}, \citenamefont {Rio},\ and\
  \citenamefont {Skrzypczyk}}]{Goold2016}%
  \BibitemOpen
  \bibfield  {author} {\bibinfo {author} {\bibfnamefont {J.}~\bibnamefont
  {Goold}}, \bibinfo {author} {\bibfnamefont {M.}~\bibnamefont {Huber}},
  \bibinfo {author} {\bibfnamefont {A.}~\bibnamefont {Riera}}, \bibinfo
  {author} {\bibfnamefont {L.~d.}\ \bibnamefont {Rio}},\ and\ \bibinfo {author}
  {\bibfnamefont {P.}~\bibnamefont {Skrzypczyk}},\ }\bibfield  {title}
  {\bibinfo {title} {The role of quantum information in thermodynamics—a
  topical review},\ }\href {https://doi.org/10.1088/1751-8113/49/14/143001}
  {\bibfield  {journal} {\bibinfo  {journal} {Journal of Physics A:
  Mathematical and Theoretical}\ }\textbf {\bibinfo {volume} {49}},\ \bibinfo
  {pages} {143001} (\bibinfo {year} {2016})}\BibitemShut {NoStop}%
\bibitem [{\citenamefont {Lieb}\ and\ \citenamefont {Ruskai}(1973)}]{Lieb1973}%
  \BibitemOpen
  \bibfield  {author} {\bibinfo {author} {\bibfnamefont {E.~H.}\ \bibnamefont
  {Lieb}}\ and\ \bibinfo {author} {\bibfnamefont {M.~B.}\ \bibnamefont
  {Ruskai}},\ }\bibfield  {title} {\bibinfo {title} {Proof of the strong
  subadditivity of quantum-mechanical entropy},\ }\href
  {https://doi.org/10.1063/1.1666274} {\bibfield  {journal} {\bibinfo
  {journal} {Journal of Mathematical Physics}\ }\textbf {\bibinfo {volume}
  {14}},\ \bibinfo {pages} {1938–1941} (\bibinfo {year} {1973})}\BibitemShut
  {NoStop}%
\bibitem [{\citenamefont {Schumacher}\ and\ \citenamefont
  {Nielsen}(1996{\natexlab{a}})}]{Schumacher1996}%
  \BibitemOpen
  \bibfield  {author} {\bibinfo {author} {\bibfnamefont {B.}~\bibnamefont
  {Schumacher}}\ and\ \bibinfo {author} {\bibfnamefont {M.~A.}\ \bibnamefont
  {Nielsen}},\ }\bibfield  {title} {\bibinfo {title} {Quantum data processing
  and error correction},\ }\href {https://doi.org/10.1103/physreva.54.2629}
  {\bibfield  {journal} {\bibinfo  {journal} {Physical Review A}\ }\textbf
  {\bibinfo {volume} {54}},\ \bibinfo {pages} {2629–2635} (\bibinfo {year}
  {1996}{\natexlab{a}})}\BibitemShut {NoStop}%
\bibitem [{\citenamefont {Capela}\ \emph {et~al.}(2020)\citenamefont {Capela},
  \citenamefont {C{\'e}leri}, \citenamefont {Modi},\ and\ \citenamefont
  {Chaves}}]{capela2020monogamy}%
  \BibitemOpen
  \bibfield  {author} {\bibinfo {author} {\bibfnamefont {M.}~\bibnamefont
  {Capela}}, \bibinfo {author} {\bibfnamefont {L.~C.}\ \bibnamefont
  {C{\'e}leri}}, \bibinfo {author} {\bibfnamefont {K.}~\bibnamefont {Modi}},\
  and\ \bibinfo {author} {\bibfnamefont {R.}~\bibnamefont {Chaves}},\
  }\bibfield  {title} {\bibinfo {title} {Monogamy of temporal correlations:
  Witnessing non-markovianity beyond data processing},\ }\href
  {https://doi.org/https://doi.org/10.1103/PhysRevResearch.2.013350} {\bibfield
   {journal} {\bibinfo  {journal} {Physical Review Research}\ }\textbf
  {\bibinfo {volume} {2}},\ \bibinfo {pages} {013350} (\bibinfo {year}
  {2020})}\BibitemShut {NoStop}%
\bibitem [{\citenamefont {Capela}\ \emph {et~al.}(2022)\citenamefont {Capela},
  \citenamefont {C{\'e}leri}, \citenamefont {Chaves},\ and\ \citenamefont
  {Modi}}]{capela2022quantum}%
  \BibitemOpen
  \bibfield  {author} {\bibinfo {author} {\bibfnamefont {M.}~\bibnamefont
  {Capela}}, \bibinfo {author} {\bibfnamefont {L.~C.}\ \bibnamefont
  {C{\'e}leri}}, \bibinfo {author} {\bibfnamefont {R.}~\bibnamefont {Chaves}},\
  and\ \bibinfo {author} {\bibfnamefont {K.}~\bibnamefont {Modi}},\ }\bibfield
  {title} {\bibinfo {title} {Quantum markov monogamy inequalities},\ }\href
  {https://doi.org/https://doi.org/10.1103/PhysRevA.106.022218} {\bibfield
  {journal} {\bibinfo  {journal} {Physical Review A}\ }\textbf {\bibinfo
  {volume} {106}},\ \bibinfo {pages} {022218} (\bibinfo {year}
  {2022})}\BibitemShut {NoStop}%
\bibitem [{\citenamefont {Branciard}\ \emph {et~al.}(2015)\citenamefont
  {Branciard}, \citenamefont {Ara{\'u}jo}, \citenamefont {Feix}, \citenamefont
  {Costa},\ and\ \citenamefont {Brukner}}]{branciard2015simplest}%
  \BibitemOpen
  \bibfield  {author} {\bibinfo {author} {\bibfnamefont {C.}~\bibnamefont
  {Branciard}}, \bibinfo {author} {\bibfnamefont {M.}~\bibnamefont
  {Ara{\'u}jo}}, \bibinfo {author} {\bibfnamefont {A.}~\bibnamefont {Feix}},
  \bibinfo {author} {\bibfnamefont {F.}~\bibnamefont {Costa}},\ and\ \bibinfo
  {author} {\bibfnamefont {{\v{C}}.}~\bibnamefont {Brukner}},\ }\bibfield
  {title} {\bibinfo {title} {The simplest causal inequalities and their
  violation},\ }\href {https://doi.org/10.1088/1367-2630/18/1/013008}
  {\bibfield  {journal} {\bibinfo  {journal} {New Journal of Physics}\ }\textbf
  {\bibinfo {volume} {18}},\ \bibinfo {pages} {013008} (\bibinfo {year}
  {2015})}\BibitemShut {NoStop}%
\bibitem [{\citenamefont {Dourdent}\ \emph {et~al.}(2022)\citenamefont
  {Dourdent}, \citenamefont {Abbott}, \citenamefont {Brunner}, \citenamefont
  {{\v{S}}upi{\'c}},\ and\ \citenamefont {Branciard}}]{dourdent2022semi}%
  \BibitemOpen
  \bibfield  {author} {\bibinfo {author} {\bibfnamefont {H.}~\bibnamefont
  {Dourdent}}, \bibinfo {author} {\bibfnamefont {A.~A.}\ \bibnamefont
  {Abbott}}, \bibinfo {author} {\bibfnamefont {N.}~\bibnamefont {Brunner}},
  \bibinfo {author} {\bibfnamefont {I.}~\bibnamefont {{\v{S}}upi{\'c}}},\ and\
  \bibinfo {author} {\bibfnamefont {C.}~\bibnamefont {Branciard}},\ }\bibfield
  {title} {\bibinfo {title} {Semi-device-independent certification of causal
  nonseparability with trusted quantum inputs},\ }\href
  {https://doi.org/https://doi.org/10.1103/PhysRevLett.129.090402} {\bibfield
  {journal} {\bibinfo  {journal} {Physical Review Letters}\ }\textbf {\bibinfo
  {volume} {129}},\ \bibinfo {pages} {090402} (\bibinfo {year}
  {2022})}\BibitemShut {NoStop}%
\bibitem [{\citenamefont {van~der Lugt}\ \emph {et~al.}(2023)\citenamefont
  {van~der Lugt}, \citenamefont {Barrett},\ and\ \citenamefont
  {Chiribella}}]{van2023device}%
  \BibitemOpen
  \bibfield  {author} {\bibinfo {author} {\bibfnamefont {T.}~\bibnamefont
  {van~der Lugt}}, \bibinfo {author} {\bibfnamefont {J.}~\bibnamefont
  {Barrett}},\ and\ \bibinfo {author} {\bibfnamefont {G.}~\bibnamefont
  {Chiribella}},\ }\bibfield  {title} {\bibinfo {title} {Device-independent
  certification of indefinite causal order in the quantum switch},\ }\href
  {https://doi.org/https://doi.org/10.1038/s41467-023-40162-8} {\bibfield
  {journal} {\bibinfo  {journal} {Nature Communications}\ }\textbf {\bibinfo
  {volume} {14}},\ \bibinfo {pages} {5811} (\bibinfo {year}
  {2023})}\BibitemShut {NoStop}%
\bibitem [{\citenamefont {Jia}\ \emph {et~al.}(2019)\citenamefont {Jia},
  \citenamefont {Costa} \emph {et~al.}}]{jia2019causal}%
  \BibitemOpen
  \bibfield  {author} {\bibinfo {author} {\bibfnamefont {D.}~\bibnamefont
  {Jia}}, \bibinfo {author} {\bibfnamefont {F.}~\bibnamefont {Costa}}, \emph
  {et~al.},\ }\bibfield  {title} {\bibinfo {title} {Causal order as a resource
  for quantum communication},\ }\href
  {https://doi.org/https://doi.org/10.1103/PhysRevA.100.052319} {\bibfield
  {journal} {\bibinfo  {journal} {Physical Review A}\ }\textbf {\bibinfo
  {volume} {100}},\ \bibinfo {pages} {052319} (\bibinfo {year}
  {2019})}\BibitemShut {NoStop}%
\bibitem [{\citenamefont {Coles}\ \emph {et~al.}(2017)\citenamefont {Coles},
  \citenamefont {Berta}, \citenamefont {Tomamichel},\ and\ \citenamefont
  {Wehner}}]{Coles2017}%
  \BibitemOpen
  \bibfield  {author} {\bibinfo {author} {\bibfnamefont {P.~J.}\ \bibnamefont
  {Coles}}, \bibinfo {author} {\bibfnamefont {M.}~\bibnamefont {Berta}},
  \bibinfo {author} {\bibfnamefont {M.}~\bibnamefont {Tomamichel}},\ and\
  \bibinfo {author} {\bibfnamefont {S.}~\bibnamefont {Wehner}},\ }\bibfield
  {title} {\bibinfo {title} {Entropic uncertainty relations and their
  applications},\ }\bibfield  {journal} {\bibinfo  {journal} {Reviews of Modern
  Physics}\ }\textbf {\bibinfo {volume} {89}},\ \href
  {https://doi.org/10.1103/revmodphys.89.015002} {10.1103/revmodphys.89.015002}
  (\bibinfo {year} {2017})\BibitemShut {NoStop}%
\bibitem [{\citenamefont {Dourdent}\ \emph {et~al.}(2024)\citenamefont
  {Dourdent}, \citenamefont {Abbott}, \citenamefont {Šupić},\ and\
  \citenamefont {Branciard}}]{Dourdent2024}%
  \BibitemOpen
  \bibfield  {author} {\bibinfo {author} {\bibfnamefont {H.}~\bibnamefont
  {Dourdent}}, \bibinfo {author} {\bibfnamefont {A.~A.}\ \bibnamefont
  {Abbott}}, \bibinfo {author} {\bibfnamefont {I.}~\bibnamefont {Šupić}},\
  and\ \bibinfo {author} {\bibfnamefont {C.}~\bibnamefont {Branciard}},\
  }\bibfield  {title} {\bibinfo {title} {Network-device-independent
  certification of causal nonseparability},\ }\href
  {https://doi.org/10.22331/q-2024-10-30-1514} {\bibfield  {journal} {\bibinfo
  {journal} {Quantum}\ }\textbf {\bibinfo {volume} {8}},\ \bibinfo {pages}
  {1514} (\bibinfo {year} {2024})}\BibitemShut {NoStop}%
\bibitem [{\citenamefont {Jamiołkowski}(1972)}]{jamio72}%
  \BibitemOpen
  \bibfield  {author} {\bibinfo {author} {\bibfnamefont {A.}~\bibnamefont
  {Jamiołkowski}},\ }\bibfield  {title} {\bibinfo {title} {Linear
  transformations which preserve trace and positive semidefiniteness of
  operators},\ }\href {https://doi.org/10.1016/0034-4877(72)90011-0} {\ \textbf
  {\bibinfo {volume} {3}},\ \bibinfo {pages} {275} (\bibinfo {year}
  {1972})}\BibitemShut {NoStop}%
\bibitem [{\citenamefont {Choi}(1975{\natexlab{a}})}]{choi75}%
  \BibitemOpen
  \bibfield  {author} {\bibinfo {author} {\bibfnamefont {M.-D.}\ \bibnamefont
  {Choi}},\ }\bibfield  {title} {\bibinfo {title} {Positive semidefinite
  biquadratic forms},\ }\href {https://doi.org/10.1016/0024-3795(75)90058-0} {\
  \textbf {\bibinfo {volume} {12}},\ \bibinfo {pages} {95} (\bibinfo {year}
  {1975}{\natexlab{a}})}\BibitemShut {NoStop}%
\bibitem [{\citenamefont {Choi}(1975{\natexlab{b}})}]{choi1975completely}%
  \BibitemOpen
  \bibfield  {author} {\bibinfo {author} {\bibfnamefont {M.-D.}\ \bibnamefont
  {Choi}},\ }\bibfield  {title} {\bibinfo {title} {Completely positive linear
  maps on complex matrices},\ }\href
  {https://doi.org/https://doi.org/10.1016/0024-3795(75)90075-0} {\bibfield
  {journal} {\bibinfo  {journal} {Linear algebra and its applications}\
  }\textbf {\bibinfo {volume} {10}},\ \bibinfo {pages} {285} (\bibinfo {year}
  {1975}{\natexlab{b}})}\BibitemShut {NoStop}%
\bibitem [{\citenamefont {Milz}\ and\ \citenamefont
  {Modi}(2021)}]{milz2021quantum}%
  \BibitemOpen
  \bibfield  {author} {\bibinfo {author} {\bibfnamefont {S.}~\bibnamefont
  {Milz}}\ and\ \bibinfo {author} {\bibfnamefont {K.}~\bibnamefont {Modi}},\
  }\bibfield  {title} {\bibinfo {title} {Quantum stochastic processes and
  quantum non-markovian phenomena},\ }\href
  {https://doi.org/https://doi.org/10.1103/PRXQuantum.2.030201} {\bibfield
  {journal} {\bibinfo  {journal} {PRX Quantum}\ }\textbf {\bibinfo {volume}
  {2}},\ \bibinfo {pages} {030201} (\bibinfo {year} {2021})}\BibitemShut
  {NoStop}%
\bibitem [{\citenamefont {Giarmatzi}\ and\ \citenamefont
  {Costa}(2021{\natexlab{a}})}]{Giarmatzi_2021}%
  \BibitemOpen
  \bibfield  {author} {\bibinfo {author} {\bibfnamefont {C.}~\bibnamefont
  {Giarmatzi}}\ and\ \bibinfo {author} {\bibfnamefont {F.}~\bibnamefont
  {Costa}},\ }\bibfield  {title} {\bibinfo {title} {Witnessing quantum memory
  in non-markovian processes},\ }\href
  {https://doi.org/10.22331/q-2021-04-26-440} {\bibfield  {journal} {\bibinfo
  {journal} {Quantum}\ }\textbf {\bibinfo {volume} {5}},\ \bibinfo {pages}
  {440} (\bibinfo {year} {2021}{\natexlab{a}})}\BibitemShut {NoStop}%
\bibitem [{\citenamefont {Nery}\ \emph {et~al.}(2021)\citenamefont {Nery},
  \citenamefont {Quintino}, \citenamefont {Guérin}, \citenamefont {Maciel},\
  and\ \citenamefont {Vianna}}]{Nery_2021}%
  \BibitemOpen
  \bibfield  {author} {\bibinfo {author} {\bibfnamefont {M.}~\bibnamefont
  {Nery}}, \bibinfo {author} {\bibfnamefont {M.~T.}\ \bibnamefont {Quintino}},
  \bibinfo {author} {\bibfnamefont {P.~A.}\ \bibnamefont {Guérin}}, \bibinfo
  {author} {\bibfnamefont {T.~O.}\ \bibnamefont {Maciel}},\ and\ \bibinfo
  {author} {\bibfnamefont {R.~O.}\ \bibnamefont {Vianna}},\ }\bibfield  {title}
  {\bibinfo {title} {Simple and maximally robust processes with no classical
  common-cause or direct-cause explanation},\ }\href
  {https://doi.org/10.22331/q-2021-09-09-538} {\bibfield  {journal} {\bibinfo
  {journal} {Quantum}\ }\textbf {\bibinfo {volume} {5}},\ \bibinfo {pages}
  {538} (\bibinfo {year} {2021})}\BibitemShut {NoStop}%
\bibitem [{\citenamefont {Taranto}\ \emph {et~al.}(2024)\citenamefont
  {Taranto}, \citenamefont {Quintino}, \citenamefont {Murao},\ and\
  \citenamefont {Milz}}]{Taranto_2024}%
  \BibitemOpen
  \bibfield  {author} {\bibinfo {author} {\bibfnamefont {P.}~\bibnamefont
  {Taranto}}, \bibinfo {author} {\bibfnamefont {M.~T.}\ \bibnamefont
  {Quintino}}, \bibinfo {author} {\bibfnamefont {M.}~\bibnamefont {Murao}},\
  and\ \bibinfo {author} {\bibfnamefont {S.}~\bibnamefont {Milz}},\ }\bibfield
  {title} {\bibinfo {title} {Characterising the hierarchy of multi-time quantum
  processes with classical memory},\ }\href
  {https://doi.org/10.22331/q-2024-05-02-1328} {\bibfield  {journal} {\bibinfo
  {journal} {Quantum}\ }\textbf {\bibinfo {volume} {8}},\ \bibinfo {pages}
  {1328} (\bibinfo {year} {2024})}\BibitemShut {NoStop}%
\bibitem [{\citenamefont {Goswami}\ \emph {et~al.}(2024)\citenamefont
  {Goswami}, \citenamefont {Roy}, \citenamefont {Srivastava}, \citenamefont
  {Perez}, \citenamefont {Giarmatzi}, \citenamefont {Gilchrist},\ and\
  \citenamefont {Costa}}]{goswami2024hamiltonian}%
  \BibitemOpen
  \bibfield  {author} {\bibinfo {author} {\bibfnamefont {K.}~\bibnamefont
  {Goswami}}, \bibinfo {author} {\bibfnamefont {A.~K.}\ \bibnamefont {Roy}},
  \bibinfo {author} {\bibfnamefont {V.}~\bibnamefont {Srivastava}}, \bibinfo
  {author} {\bibfnamefont {B.}~\bibnamefont {Perez}}, \bibinfo {author}
  {\bibfnamefont {C.}~\bibnamefont {Giarmatzi}}, \bibinfo {author}
  {\bibfnamefont {A.}~\bibnamefont {Gilchrist}},\ and\ \bibinfo {author}
  {\bibfnamefont {F.}~\bibnamefont {Costa}},\ }\href
  {https://arxiv.org/abs/2412.01998} {\bibinfo {title} {Hamiltonian
  characterisation of multi-time processes with classical memory}} (\bibinfo
  {year} {2024}),\ \Eprint {https://arxiv.org/abs/2412.01998} {arXiv:2412.01998
  [quant-ph]} \BibitemShut {NoStop}%
\bibitem [{\citenamefont {Branciard}(2016)}]{branciard2016witnesses}%
  \BibitemOpen
  \bibfield  {author} {\bibinfo {author} {\bibfnamefont {C.}~\bibnamefont
  {Branciard}},\ }\bibfield  {title} {\bibinfo {title} {Witnesses of causal
  nonseparability: an introduction and a few case studies},\ }\href
  {https://doi.org/https://doi.org/10.1038/srep26018} {\bibfield  {journal}
  {\bibinfo  {journal} {Scientific reports}\ }\textbf {\bibinfo {volume} {6}},\
  \bibinfo {pages} {26018} (\bibinfo {year} {2016})}\BibitemShut {NoStop}%
\bibitem [{\citenamefont {Giarmatzi}\ and\ \citenamefont
  {Costa}(2021{\natexlab{b}})}]{giarmatzi2021witnessing}%
  \BibitemOpen
  \bibfield  {author} {\bibinfo {author} {\bibfnamefont {C.}~\bibnamefont
  {Giarmatzi}}\ and\ \bibinfo {author} {\bibfnamefont {F.}~\bibnamefont
  {Costa}},\ }\bibfield  {title} {\bibinfo {title} {Witnessing quantum memory
  in non-markovian processes},\ }\href
  {https://doi.org/https://doi.org/10.22331/q-2021-04-26-440} {\bibfield
  {journal} {\bibinfo  {journal} {Quantum}\ }\textbf {\bibinfo {volume} {5}},\
  \bibinfo {pages} {440} (\bibinfo {year} {2021}{\natexlab{b}})}\BibitemShut
  {NoStop}%
\bibitem [{\citenamefont {Shannon}(1948)}]{shannon1948mathematical}%
  \BibitemOpen
  \bibfield  {author} {\bibinfo {author} {\bibfnamefont {C.~E.}\ \bibnamefont
  {Shannon}},\ }\bibfield  {title} {\bibinfo {title} {A mathematical theory of
  communication},\ }\href {https://doi.org/10.1002/j.1538-7305.1948.tb01338.x}
  {\bibfield  {journal} {\bibinfo  {journal} {The Bell system technical
  journal}\ }\textbf {\bibinfo {volume} {27}},\ \bibinfo {pages} {379}
  (\bibinfo {year} {1948})}\BibitemShut {NoStop}%
\bibitem [{\citenamefont {Yeung}(2008)}]{yeung2008information}%
  \BibitemOpen
  \bibfield  {author} {\bibinfo {author} {\bibfnamefont {R.~W.}\ \bibnamefont
  {Yeung}},\ }\href@noop {} {\emph {\bibinfo {title} {Information theory and
  network coding}}}\ (\bibinfo  {publisher} {Springer Science \& Business
  Media},\ \bibinfo {year} {2008})\BibitemShut {NoStop}%
\bibitem [{\citenamefont {Watrous}(2018)}]{watrous2018theory}%
  \BibitemOpen
  \bibfield  {author} {\bibinfo {author} {\bibfnamefont {J.}~\bibnamefont
  {Watrous}},\ }\href@noop {} {\emph {\bibinfo {title} {The theory of quantum
  information}}}\ (\bibinfo  {publisher} {Cambridge university press},\
  \bibinfo {year} {2018})\BibitemShut {NoStop}%
\bibitem [{\citenamefont {Schumacher}\ and\ \citenamefont
  {Nielsen}(1996{\natexlab{b}})}]{schumacher1996quantum}%
  \BibitemOpen
  \bibfield  {author} {\bibinfo {author} {\bibfnamefont {B.}~\bibnamefont
  {Schumacher}}\ and\ \bibinfo {author} {\bibfnamefont {M.~A.}\ \bibnamefont
  {Nielsen}},\ }\bibfield  {title} {\bibinfo {title} {Quantum data processing
  and error correction},\ }\href
  {https://doi.org/https://doi.org/10.1103/PhysRevA.54.2629} {\bibfield
  {journal} {\bibinfo  {journal} {Physical Review A}\ }\textbf {\bibinfo
  {volume} {54}},\ \bibinfo {pages} {2629} (\bibinfo {year}
  {1996}{\natexlab{b}})}\BibitemShut {NoStop}%
\bibitem [{\citenamefont {Buscemi}(2014)}]{buscemi2014complete}%
  \BibitemOpen
  \bibfield  {author} {\bibinfo {author} {\bibfnamefont {F.}~\bibnamefont
  {Buscemi}},\ }\bibfield  {title} {\bibinfo {title} {{Complete positivity,
  Markovianity, and the quantum data-processing inequality, in the presence of
  initial system-environment correlations}},\ }\href
  {https://doi.org/https://doi.org/10.1103/PhysRevLett.113.140502} {\bibfield
  {journal} {\bibinfo  {journal} {Physical review letters}\ }\textbf {\bibinfo
  {volume} {113}},\ \bibinfo {pages} {140502} (\bibinfo {year}
  {2014})}\BibitemShut {NoStop}%
\bibitem [{\citenamefont {Hayden}\ \emph {et~al.}(2004)\citenamefont {Hayden},
  \citenamefont {Jozsa}, \citenamefont {Petz},\ and\ \citenamefont
  {Winter}}]{hayden2004structure}%
  \BibitemOpen
  \bibfield  {author} {\bibinfo {author} {\bibfnamefont {P.}~\bibnamefont
  {Hayden}}, \bibinfo {author} {\bibfnamefont {R.}~\bibnamefont {Jozsa}},
  \bibinfo {author} {\bibfnamefont {D.}~\bibnamefont {Petz}},\ and\ \bibinfo
  {author} {\bibfnamefont {A.}~\bibnamefont {Winter}},\ }\bibfield  {title}
  {\bibinfo {title} {Structure of states which satisfy strong subadditivity of
  quantum entropy with equality},\ }\href
  {https://doi.org/https://doi.org/10.1007/s00220-004-1049-z} {\bibfield
  {journal} {\bibinfo  {journal} {Communications in mathematical physics}\
  }\textbf {\bibinfo {volume} {246}},\ \bibinfo {pages} {359} (\bibinfo {year}
  {2004})}\BibitemShut {NoStop}%
\bibitem [{\citenamefont {Taranto}\ \emph {et~al.}(2020)\citenamefont
  {Taranto}, \citenamefont {Bakhshinezhad}, \citenamefont {Sch{\"u}ttelkopf},
  \citenamefont {Clivaz},\ and\ \citenamefont
  {Huber}}]{taranto2020exponential}%
  \BibitemOpen
  \bibfield  {author} {\bibinfo {author} {\bibfnamefont {P.}~\bibnamefont
  {Taranto}}, \bibinfo {author} {\bibfnamefont {F.}~\bibnamefont
  {Bakhshinezhad}}, \bibinfo {author} {\bibfnamefont {P.}~\bibnamefont
  {Sch{\"u}ttelkopf}}, \bibinfo {author} {\bibfnamefont {F.}~\bibnamefont
  {Clivaz}},\ and\ \bibinfo {author} {\bibfnamefont {M.}~\bibnamefont
  {Huber}},\ }\bibfield  {title} {\bibinfo {title} {Exponential improvement for
  quantum cooling through finite-memory effects},\ }\href
  {https://doi.org/https://doi.org/10.1103/PhysRevApplied.14.054005} {\bibfield
   {journal} {\bibinfo  {journal} {Physical Review Applied}\ }\textbf {\bibinfo
  {volume} {14}},\ \bibinfo {pages} {054005} (\bibinfo {year}
  {2020})}\BibitemShut {NoStop}%
\bibitem [{\citenamefont {Taranto}\ \emph {et~al.}(2021)\citenamefont
  {Taranto}, \citenamefont {Pollock},\ and\ \citenamefont
  {Modi}}]{taranto2021non}%
  \BibitemOpen
  \bibfield  {author} {\bibinfo {author} {\bibfnamefont {P.}~\bibnamefont
  {Taranto}}, \bibinfo {author} {\bibfnamefont {F.~A.}\ \bibnamefont
  {Pollock}},\ and\ \bibinfo {author} {\bibfnamefont {K.}~\bibnamefont
  {Modi}},\ }\bibfield  {title} {\bibinfo {title} {Non-markovian memory
  strength bounds quantum process recoverability},\ }\href
  {https://doi.org/https://doi.org/10.1038/s41534-021-00481-4} {\bibfield
  {journal} {\bibinfo  {journal} {npj Quantum Information}\ }\textbf {\bibinfo
  {volume} {7}},\ \bibinfo {pages} {149} (\bibinfo {year} {2021})}\BibitemShut
  {NoStop}%
\bibitem [{\citenamefont {Ara{\'u}jo}\ \emph {et~al.}(2015)\citenamefont
  {Ara{\'u}jo}, \citenamefont {Branciard}, \citenamefont {Costa}, \citenamefont
  {Feix}, \citenamefont {Giarmatzi},\ and\ \citenamefont
  {Brukner}}]{araujo2015witnessing}%
  \BibitemOpen
  \bibfield  {author} {\bibinfo {author} {\bibfnamefont {M.}~\bibnamefont
  {Ara{\'u}jo}}, \bibinfo {author} {\bibfnamefont {C.}~\bibnamefont
  {Branciard}}, \bibinfo {author} {\bibfnamefont {F.}~\bibnamefont {Costa}},
  \bibinfo {author} {\bibfnamefont {A.}~\bibnamefont {Feix}}, \bibinfo {author}
  {\bibfnamefont {C.}~\bibnamefont {Giarmatzi}},\ and\ \bibinfo {author}
  {\bibfnamefont {{\v{C}}.}~\bibnamefont {Brukner}},\ }\bibfield  {title}
  {\bibinfo {title} {Witnessing causal nonseparability},\ }\href
  {https://doi.org/https://doi.org/10.1088/1367-2630/17/10/102001} {\bibfield
  {journal} {\bibinfo  {journal} {New Journal of Physics}\ }\textbf {\bibinfo
  {volume} {17}},\ \bibinfo {pages} {102001} (\bibinfo {year}
  {2015})}\BibitemShut {NoStop}%
\bibitem [{\citenamefont {Milz}\ \emph {et~al.}(2019)\citenamefont {Milz},
  \citenamefont {Kim}, \citenamefont {Pollock},\ and\ \citenamefont
  {Modi}}]{milz2019completely}%
  \BibitemOpen
  \bibfield  {author} {\bibinfo {author} {\bibfnamefont {S.}~\bibnamefont
  {Milz}}, \bibinfo {author} {\bibfnamefont {M.}~\bibnamefont {Kim}}, \bibinfo
  {author} {\bibfnamefont {F.~A.}\ \bibnamefont {Pollock}},\ and\ \bibinfo
  {author} {\bibfnamefont {K.}~\bibnamefont {Modi}},\ }\bibfield  {title}
  {\bibinfo {title} {Completely positive divisibility does not mean
  markovianity},\ }\href
  {https://doi.org/https://doi.org/10.1103/PhysRevLett.123.040401} {\bibfield
  {journal} {\bibinfo  {journal} {Physical review letters}\ }\textbf {\bibinfo
  {volume} {123}},\ \bibinfo {pages} {040401} (\bibinfo {year}
  {2019})}\BibitemShut {NoStop}%
\bibitem [{\citenamefont {Milz}\ \emph {et~al.}(2020)\citenamefont {Milz},
  \citenamefont {Sakuldee}, \citenamefont {Pollock},\ and\ \citenamefont
  {Modi}}]{milz2020kolmogorov}%
  \BibitemOpen
  \bibfield  {author} {\bibinfo {author} {\bibfnamefont {S.}~\bibnamefont
  {Milz}}, \bibinfo {author} {\bibfnamefont {F.}~\bibnamefont {Sakuldee}},
  \bibinfo {author} {\bibfnamefont {F.~A.}\ \bibnamefont {Pollock}},\ and\
  \bibinfo {author} {\bibfnamefont {K.}~\bibnamefont {Modi}},\ }\bibfield
  {title} {\bibinfo {title} {Kolmogorov extension theorem for (quantum) causal
  modelling and general probabilistic theories},\ }\href
  {https://doi.org/https://doi.org/10.22331/q-2020-04-20-255} {\bibfield
  {journal} {\bibinfo  {journal} {Quantum}\ }\textbf {\bibinfo {volume} {4}},\
  \bibinfo {pages} {255} (\bibinfo {year} {2020})}\BibitemShut {NoStop}%
\bibitem [{\citenamefont {Taranto}\ \emph {et~al.}(2019)\citenamefont
  {Taranto}, \citenamefont {Pollock}, \citenamefont {Milz}, \citenamefont
  {Tomamichel},\ and\ \citenamefont {Modi}}]{taranto2019quantum}%
  \BibitemOpen
  \bibfield  {author} {\bibinfo {author} {\bibfnamefont {P.}~\bibnamefont
  {Taranto}}, \bibinfo {author} {\bibfnamefont {F.~A.}\ \bibnamefont
  {Pollock}}, \bibinfo {author} {\bibfnamefont {S.}~\bibnamefont {Milz}},
  \bibinfo {author} {\bibfnamefont {M.}~\bibnamefont {Tomamichel}},\ and\
  \bibinfo {author} {\bibfnamefont {K.}~\bibnamefont {Modi}},\ }\bibfield
  {title} {\bibinfo {title} {Quantum markov order},\ }\href
  {https://doi.org/https://doi.org/10.1103/PhysRevLett.122.140401} {\bibfield
  {journal} {\bibinfo  {journal} {Physical Review Letters}\ }\textbf {\bibinfo
  {volume} {122}},\ \bibinfo {pages} {140401} (\bibinfo {year}
  {2019})}\BibitemShut {NoStop}%
\bibitem [{\citenamefont {Schumacher}(1995)}]{schumacher1995quantum}%
  \BibitemOpen
  \bibfield  {author} {\bibinfo {author} {\bibfnamefont {B.}~\bibnamefont
  {Schumacher}},\ }\bibfield  {title} {\bibinfo {title} {Quantum coding},\
  }\href {https://doi.org/https://doi.org/10.1103/PhysRevA.51.2738} {\bibfield
  {journal} {\bibinfo  {journal} {Physical Review A}\ }\textbf {\bibinfo
  {volume} {51}},\ \bibinfo {pages} {2738} (\bibinfo {year}
  {1995})}\BibitemShut {NoStop}%
\bibitem [{\citenamefont {Frank}\ and\ \citenamefont
  {Lieb}(2013)}]{frank2013monotonicity}%
  \BibitemOpen
  \bibfield  {author} {\bibinfo {author} {\bibfnamefont {R.~L.}\ \bibnamefont
  {Frank}}\ and\ \bibinfo {author} {\bibfnamefont {E.~H.}\ \bibnamefont
  {Lieb}},\ }\bibfield  {title} {\bibinfo {title} {Monotonicity of a relative
  {R{\'e}nyi} entropy},\ }\bibfield  {journal} {\bibinfo  {journal} {Journal of
  Mathematical Physics}\ }\textbf {\bibinfo {volume} {54}},\ \href
  {https://doi.org/https://doi.org/10.1063/1.4838835}
  {https://doi.org/10.1063/1.4838835} (\bibinfo {year} {2013})\BibitemShut
  {NoStop}%
\bibitem [{\citenamefont {Datta}(2009)}]{datta2009min}%
  \BibitemOpen
  \bibfield  {author} {\bibinfo {author} {\bibfnamefont {N.}~\bibnamefont
  {Datta}},\ }\bibfield  {title} {\bibinfo {title} {Min-and max-relative
  entropies and a new entanglement monotone},\ }\href
  {https://doi.org/https://doi.org/10.1109/TIT.2009.2018325} {\bibfield
  {journal} {\bibinfo  {journal} {IEEE Transactions on Information Theory}\
  }\textbf {\bibinfo {volume} {55}},\ \bibinfo {pages} {2816} (\bibinfo {year}
  {2009})}\BibitemShut {NoStop}%
\bibitem [{\citenamefont {Gour}(2024)}]{gour2024resourcesquantumworld}%
  \BibitemOpen
  \bibfield  {author} {\bibinfo {author} {\bibfnamefont {G.}~\bibnamefont
  {Gour}},\ }\href {https://arxiv.org/abs/2402.05474} {\bibinfo {title}
  {Resources of the quantum world}} (\bibinfo {year} {2024}),\ \Eprint
  {https://arxiv.org/abs/2402.05474} {arXiv:2402.05474 [quant-ph]} \BibitemShut
  {NoStop}%
\end{thebibliography}
\end{document}